\pgfplotsset{compat=1.17}
\newcommand{\cmark}{\ding{51}}%
\newcommand{\xmark}{\ding{55}}%
\lstinline\endcsname{%
  \leavevmode
  \bgroup
}{%
  \leavevmode
  \ifmmode\hbox\fi
  \bgroup
}{}{%
  \typeout{Patching of \string\lstinline\space failed!}%
}
\lstdefinelanguage{msp430asm}{
	morekeywords={mov,add,addc,sub,subc,cmp,dadd,bit,bic,bis,
				  xor,and,rrc,rra,push,swpb,call,reti,sxt,jeq,
				  jz,jne,jnz,jnc,jlo,jc,jhs,jn,jge,jl,jmp,
				  adc,dadc,dec,decd,inc,incd,sbc,inv,rla,rlc,
				  br,dint,eint,nop,ret,clr,clrc,clrn,clrz,pop,
				  setc,setn,setz,tst,INST1,INST2},
	sensitive=false,
	morecomment=[l]{;},
	morecomment=[s]{/*}{*/},
	morestring=[b]{"},
}
\lstdefinelanguage{alviespec}{
	keywords=[1]{mov,add,addc,sub,subc,cmp,dadd,bit,bic,bis,
				  xor,and,rrc,rra,push,swpb,call,reti,sxt,jeq,
				  jz,jne,jnz,jnc,jlo,jc,jhs,jn,jge,jl,jmp,
				  adc,dadc,dec,decd,inc,incd,sbc,inv,rla,rlc,
				  br,dint,eint,nop,ret,clr,clrc,clrn,clrz,pop,
				  setc,setn,setz,tst,INST1,INST2,zthen_else,timer_enable,create,jin,rst,rst_nz,start_counting,create_enclave,ubr,ifz,ifz_mov_nop,ifz_rst_nop,ifz_add_nop},
  keywords=[2]{cleanup,enclave,prepare,isr,att:,enc:},
	sensitive=false,
	morecomment=[s]{/*}{*/},
	morestring=[b]{"},
}
\lstdefinestyle{mystyle}{
  basicstyle=%
    \ttfamily
    \lst@ifdisplaystyle\footnotesize\fi
}
\bf\color{Red},
\ttfamily\color{Gray},
\newcommand{\Lstar}{\ensuremath{\rm L^*}\xspace}
\newcommand{\Lsharp}{\ensuremath{\rm L^\#}\xspace}
\newcommand{\fstar}{\ensuremath{\rm F^*}\xspace}
\begin{document}

\title{
  Bridging the Gap: Automated Analysis of Sancus
  \thanks{
    Work partially supported by projects
    ``SEcurity and RIghts In the CyberSpace - SERICS'' (PE00000014),
    ``Interconnected Nord-Est Innovation Ecosystem  - iNEST'' (ECS00000043), and
    ``Automatic Modelling and $\forall$erification of Dedicated sEcUrity deviceS - AM$\forall$DEUS" (PRIN P2022EPPHM),
     under the National Recovery and Resilience Plan (NRRP) funded by the European Union - NextGenerationEU.
  }
}

\author{
  \IEEEauthorblockN{Matteo Busi}
  \IEEEauthorblockA{
  \textit{Ca' Foscari University of Venice}\\
  Venice, Italy \\
  matteo.busi@unive.it}
\and
  \IEEEauthorblockN{Riccardo Focardi}
  \IEEEauthorblockA{
  \textit{Ca' Foscari University of Venice}\\
  Venice, Italy \\
  focardi@unive.it}
\and
  \IEEEauthorblockN{Flaminia Luccio}
  \IEEEauthorblockA{
  \textit{Ca' Foscari University of Venice}\\
  Venice, Italy \\
  luccio@unive.it}
}


\maketitle
\makeatletter
\def\ps@IEEEtitlepagestyle{
  \def\@oddfoot{\mycopyrightnotice}
  \def\@evenfoot{}
}
\def\mycopyrightnotice{
  {\footnotesize
  \begin{minipage}{\textwidth}
  To appear at IEEE CSF'24, July 8-12, 2024, Enschede, The Netherlands.
  \copyright~2024 IEEE.
  Personal use of this material is permitted.
  Permission from IEEE must be obtained for all other uses, in any current or future media, including reprinting/republishing this material for advertising or promotional purposes, creating new collective works, for resale or redistribution to servers or lists, or reuse of any copyrighted component of this work in other works.
  The definitive Version of Record is going to appear in Proceedings of the
  37th IEEE Computer Security Foundations Symposium (IEEE CSF'24), July 8-12, 2024, Enschede, The Netherlands.
  \end{minipage}
  }
}

\begin{abstract}
  Techniques for verifying or invalidating the security of computer systems have come a long way in recent years. Extremely sophisticated tools are available to specify and formally verify the behavior of a system and, at the same time, attack techniques have evolved to the point of questioning the possibility of obtaining adequate levels of security, especially in critical applications.
  In a recent paper,~\citet{mindthegap} have clearly highlighted this inconsistency between the two worlds: on one side, formal verification allows writing irrefutable proofs of the security of a system, on the other side concrete attacks make these proofs waver, exhibiting a gap between models and implementations which is very complex to bridge.
  In this paper, we propose a new method to reduce this gap in the Sancus  embedded security architecture, by
  exploiting some peculiarities of both approaches.
  Our technique first extracts a behavioral model by directly interacting with the real Sancus system and then analyzes it to identify attacks and anomalies.
  Given a threat model, our method either finds attacks in the given threat model or gives probabilistic guarantees on the security of the system.
  We implement our method and use it to systematically rediscover known attacks and uncover new ones.
\end{abstract}

\section{Introduction}\label{sec:intro}

Various approaches have been proposed in the literature for the analysis of the security of a computer system, ranging from formal specification and verification to the search for implementation vulnerabilities, also through the support of automated tools~\cite{ScienceSecurity,goguen1982security,patrignani2019formal,abate2018exploring,busi2021securing,NIprotocols,CVS,transient,SoKattacks,EmpiricalJavaScript,ResearchValueAttacks}. These approaches typically complement each other. Formal methods, for example, give the possibility to build implementations based on solid and verifiable specifications but, in most cases, the search for vulnerabilities remains necessary to identify errors, anomalies and, more generally, discrepancies with respect to the specification. In fact, even when security is irrefutably demonstrated through proofs carried out on formal models, there may be implementation-related bugs, which can be exploited by an attacker to subvert the security of the system. On the other hand, while the search for vulnerabilities provides evidence of bugs, alone is not sufficient to determine whether a system guarantees a security property. Indeed, the search cannot be exhaustive and, in the absence of a reference model, it is difficult if not impossible to clearly outline its perimeter and coverage.

In a recent paper,~\citet{mindthegap} have clearly discussed this gap between the formal, \emph{deductive} and the experimental, \emph{inductive} approach by showing practical examples of implementations that deviate from the formal specification, making the formal proofs waver.
In the literature, we find various approaches that try to reduce this gap, for example by generating code directly from a formal specification or, vice versa, by creating models as close as possible to the implementation.
The former approach uses sophisticated software engineering techniques based on formal methods to produce highly reliable code whose security guarantees are demonstrable~\cite{Koika,Kami,KamiCaseStudy,Fstar1,Fstar2}.
This approach is useful when developing new applications, but can hardly be applied to existing ones.
Instead, in the latter approach a model of an existing system is constructed using various techniques that may or may not require the active interaction with the system and the availability of the source code~\cite{reverseLego,reverseBank,UsenixTLSFuzzing,BLEfingerprint,vrased,verilog2SMV}.
This second approach, especially if applied \emph{black-box}, i.e., without knowing the source code of the system at hand, is particularly effective when analyzing existing systems, regardless of how they were developed.
However, these analyses usually target ad-hoc properties, and it is not immediate to relate them to formal results starting from deductive approaches.

In this work, we take a further step in bridging the gap between deductive and inductive approaches by proposing a method for the analysis of Sancus, a fully-functional prototype of a security architecture for networked embedded devices. Sancus extends the MSP430 processor, incorporating dedicated hardware support for memory access control and cryptographic operations. Notably, it introduces the concept of \emph{enclaved execution}, a protected memory region ensuring code and data encapsulation, effectively isolating them from the rest of the system. This allows the code to be executed in a protected environment so that both data confidentiality and integrity are, in principle, preserved. In~\citet{busi2021securing} Sancus was enriched with a formally verified mechanism for secure interruptible enclaves, but in a subsequent work~\citet{mindthegap} pointed out discrepancies between the formal model and its implementation, leading to a number of attacks and anomalies.

Our method for the analysis of Sancus is based on \emph{active automata learning}, a technique that allows to build a formal model of a real system by interacting with it in a black-box way~\cite{angluin1987learning,vaandrager2017model}. The peculiarity of our approach with respect to the state of the art, is its closer connection with the existing formal techniques. In fact, our aim  is to make it possible to formally verify
a general security property, typical of deductive approaches on real implementations. To achieve this ambitious goal, we pick ingredients from the two worlds so that the analysis has a value and is understandable from both perspectives: the general security property on which we focus is well understood and investigated in the formal, deductive world, but we instantiate it on concrete events so that the model extracted from the implementation and the anomalies highlighted correspond to concrete traces of execution, reproducible on the real system. This makes the analysis results valuable and understandable also in the experimental, inductive world.

We focus on \emph{robust noninterference}~\cite{abate2018exploring,busi2021securing}, a security property
requiring that the behavior of the sensitive part of a system, dealing with sensitive inputs and data, never triggers outputs or events that an active attacker can observe.
In particular, we consider a system with a trusted component $T$ that takes a sensitive input $\secret \in \secrets$, noted \T{}, and we let the attacker \A{} represent the untrusted context in which \T{} is executed, written \SA{\A}{T}{\secret}.
Robust noninterference holds if, for all secret inputs $s_0, s_1 \in \secrets$, the attacker \A{} cannot distinguish \Tl{} from \Tr{}, written:
\begin{equation} \label{prop:NI}
\SAl \oequiv \SAr
\end{equation}
where $\oequiv$ only observes the output and events that are visible to the attacker $A$.

To verify this property on Sancus we have implemented a tool named \toolname{} that uses active automata learning~\cite{angluin1987learning,vaandrager2017model} to generate a finite state automaton which represents the behavior of the system, and then formally verifies \cref{prop:NI} on this automaton.
Learning certainly cannot be exhaustive and requires an abstraction that, in general, could be non-trivial to define.
In our case, this abstraction is driven by  \cref{prop:NI}: for example, since we are interested in checking for the absence of time-based side channels, we will actually observe the execution time.
\A{} and \T{} are specified as sequences of possible actions that are directly compiled into real system actions.
Specifying \A{} outlines in a clear and precise way the threat model and, consequently, the boundaries of the analysis.
Our method allows to specify \A{} and \T{} as regular expressions, providing good generality and computational tractability.
Once we have learned the model of the system, we use an efficient technique to check \cref{prop:NI} and return, in case of failure, a representative counterexample for each interference allowing the attacker \A{} to distinguish \Tl{} from \Tr{}.
Counterexamples represent concrete and reproducible execution traces that can be run on the real system.

\citet{busi2021securing} use full abstraction to show that their implemented interruptible enclaves are secure, and that interrupts do not affect the attacker's ability to discriminate trusted components.
Intuitively, if we let $A'$ and $A$ denote the attacker with and without the capability of interrupting enclaved executions, it is proved that $T_0$ and $T_1$ are indistinguishable by $A$ if and only if they are indistinguishable by $A'$.
Full abstraction is very strong and in this particular scenario it implies the preservation of noninterference.
In fact, picking $T_0 = \Tl$ and $T_1 = \Tr$ we get:
\begin{equation} \label{prop:full}
\SA{A}{T}{s_0} \oequiv \SA{A}{T}{s_1} \Rightarrow \SA{A'}{T}{s_0} \oequiv \SA{A'}{T}{s_1}
\end{equation}
This property is very appealing as it allows a developer to program \T{} without worrying about interrupts controlled by the attackers.
Here, we prove or disprove \cref{prop:full} by first learning the models of the four systems involved in the definition, and then by checking if the implication is satisfied.
If the equivalence in the premise of \cref{prop:full} holds but the one in the conclusion does not, we come up with counterexamples that break full abstraction, i.e., examples of trusted components \T{} that are secure with respect to $A$ ($\SAl \oequiv \SAr$), but are insecure with respect to $A'$ ($\SA{A'}{T}{s_0} \not\oequiv \SA{A'}{T}{s_1}$).

Using this approach we were able to automatically discover new attacks on the latest Sancus commit~\cite{mindthegap-repo}, not previously documented in the literature, representing two new discrepancies between the formal model of~\citet{busi2021securing} and its implementation.
We were also able to systematically rediscover all the implementation-model mismatches presented by~\citet{mindthegap}
 except for one, due to some limitations in the implementation that we will discuss in~\cref{sec:knownattacks}.
The new attacks are counterexamples to \cref{prop:full} due to the possibility of resetting the machine from inside the trusted component either by an explicit reset, as well as through various access control violations. We found that resetting the trusted component can be leveraged by an attacker $A'$ to leak the control flow of the program, and consequently leak a sensitive branch guard, even when the reset happens in both branches.
Intuitively, the difference between $A$ and $A'$ is related to the possibility of measuring the time when the reset or the violation occurs thanks to the capability of interrupting the enclave.
In the Sancus formal model of~\citet{busi2021securing} resets and violations are handled explicitly making them observable both by $A$ and by $A'$.
In the Sancus implementation, instead, resets and violations simply reset the machine making $A'$ more powerful than $A$ and breaking full abstraction.

As a further confirmation of the correct functioning and scalability of the tool, we automatically analyzed the various Sancus commits, reconstructing the history of the fixes made after the results of~\cite{mindthegap}.
Interestingly, a developer could use \toolname to check that a new version of Sancus is working as expected and that it does not introduce regressions. In case no attack is found, the developer can conclude that Sancus preserves robust noninterference with respect to the specified attackers and trusted components with given confidence and precision.
This resembles more general results proved for secure compilation, where a program running under different computational models must exhibit the same behavior~\cite{patrignani2019formal,abate2018exploring}.

\subsubsection*{Contributions}
We outline the most relevant contributions:
$(i)$ we propose a 
    method to formally verify noninterference and its preservation on a real security architecture.
    As far as we know, our solution is the first one that verifies noninterference through black-box automata learning;
$(ii)$ we implement our technique in a tool called \toolname~\cite{additionalmaterial} and use it to analyze Sancus, rediscovering most known attacks and anomalies and finding new ones.
    \toolname is fully automated and can be run on various commits of a real system allowing for continuous formal verification,
a practice increasingly requested by industry~\cite{facebookStatic};
$(iii)$ we present the new attacks we found on the latest Sancus commit and discuss why they constitute further discrepancies between the formal model of \citet{busi2021securing} and Sancus implementation, complementing the results of \citet{mindthegap}. We discuss possible fixes, including new ones that diverge from the formal model and would increase, in our opinion, the security of Sancus;
$(iv)$ we show that, when an attack trace is discovered, \toolname can be used to synthesize the concrete attacker directly in the Sancus assembly language.
    This is useful in several respects: first, it provides a simple working version of the attack that can be reproduced on the real system to confirm the vulnerability, similar to a PoC provided by a security expert; furthermore, these synthesized attacks can be used as regression tests on other commits.

\section{Background and related work}\label{sec:related}
We now recall the background and related work on model learning, formal analysis and various approaches that seek to bridge the gap between deductive and inductive methods.

\myskip
\subsubsection*{Model learning}
In recent years several techniques have been proposed for learning models from real hardware and software components \cite{vaandrager2017model}. The idea is to construct a model of a real System Under Learning (SUL) that can be further analyzed.
In passive model learning, the model is generated starting from execution traces that have been collected independently. In active model learning, instead, the interaction is guided by the learning process in order to improve the coverage and accuracy of the model. Model learning techniques can be white-box or black-box depending on whether or not they access the code, and differ according to the type of model generated. In a recent paper, \citet{vrased} have generated and verified a core model of an embedded device starting from the Verilog code using Verilog2SMV \cite{verilog2SMV}. This is a very interesting example of white-box model generation as the model is extracted from the code.
In this work, instead, we focus on black-box active automata learning, which adopts active model learning to build a model of a SUL in the form of finite-state automata, without any knowledge of the SUL code.
Black-box active automata learning is usually based on the classic \Lstar{} Angluin’s algorithm \cite{angluin1987learning} and its variants and optimizations, like those with faster counter-example processing~\cite{rivest1993inference}.
The \Lsharp{} algorithm~\cite{vaandrager2022new}
has comparable performance to previous highly optimized \Lstar{} implementations, but requires less implementation effort than \Lstar{}.

Black-box active automata learning has been successfully adopted, for example,  to reverse engineer Internet banking smartcard readers \cite{reverseLego}, bank cards \cite{reverseBank}, TLS implementations \cite{UsenixTLSFuzzing}, cache replacement policies \cite{CacheLearning}, for fingerprinting BLE devices \cite{BLEfingerprint}, and for analyzing TCP implementations \cite{TCPanalysis}.
All of these works learn a model of the SUL and then perform a specialized analysis on it.
Our approach is different as our focus is on the security property and the learning is driven by it.
Instead of aiming for reverse engineering or fingerprinting the SUL, we start from a noninterference property and instantiate it on a concrete threat model, so that what we learn from the SUL is what the attacker can observe.
In other words, guided by the security property, we specify the attacker's observation skills and capabilities
 producing a suitable abstraction which is used in the learning phase.
For what concerns automata learning, in this work we adopt the aforementioned  \Lsharp{} algorithm~\cite{vaandrager2022new}, providing
the first implementation  in \ocaml{}~\cite{additionalmaterial}.
\citet{chen2016pac} combine automata learning and concolic execution to verify the correctness of sequential programs.
Similarly to us, they provide probabilistic guarantees on correctness of their models.

\myskip
\subsubsection*{Secure compilation}
A prominent approach to narrowing the gap between specifications and implementations is to adopt particular toolchains that preserve properties at compile time, so that any formal guarantees proved on source code are preserved on the target machine.
This \emph{secure compilation} problem has been thoroughly investigated using formal methods (see, e.g., \cite{patrignani2019formal,abate2018exploring} for a survey).
In this work we consider the case study of Sancus, a security architecture that allows the execution of trusted code within isolated memory in a way that even a complete compromised system would not be able to tamper with the protected code and data.
A new mechanism for secure interruptible enclaves in Sancus has been introduced and formally verified by \citet{busi2021securing}, showing that a program executed in Sancus with or without interruptible enclaves behaves the same way, even when the attacker has full control of the device.
This technique, called full abstraction, provides strong formal guarantees but requires a model of the system in terms of formal semantics.
In a subsequent work, \citet{mindthegap} have shown a gap between the formal model of \cite{busi2021securing} and the actual implementation of Sancus. However, the analysis of \cite{mindthegap} was done by hand, looking for behaviors that could invalidate the assumptions behind the model of \cite{busi2021securing}.
In this work, we narrow this gap by presenting a technique and tool for systematically searching for these deviations that allowed us to find new attacks and rediscover most of issues raised in \citet{mindthegap}.


\myskip
\subsubsection*{Verified compilation}
Another important approach in the literature aims at developing toolchains capable of producing verified code from a verified source. Interesting examples include Bluespec and its formal emanations Kôika and Kami \cite{Koika,Kami,KamiCaseStudy}, and the \fstar{} toolchains \cite{Fstar1,Fstar2}. These approaches are very interesting and effective, but require  that the system be designed and maintained using a particular language which is then compiled into machine code. Therefore, they cannot be applied to existing systems. Instead, our approach aims to analyze existing implementations in a black-box, code-independent way. This makes it suitable for verifying systems developed using mainstream, unverified languages and compilers.
Our results show that the analysis scales to a real prototype device and is efficient enough to allow for continuous formal verification on different commits.
This allows developers to check that a given fix is effective and, at the same time, to immediately spot regressions when new changes are committed.

\myskip
\subsubsection*{Noninterference}
In 1982, \citet{goguen1982security} introduced the idea of noninterference. Given two confidentiality security levels $\ell_1$ and $\ell_2$, we say that $\ell_1$ does not interfere with $\ell_2$ if any interaction performed at $\ell_1$ is not interfering with any observation done at $\ell_2$. For example if $\ell_1$ and $\ell_2$ respectively represent high and low confidentiality levels, noninterference ensures that sensitive, high level information cannot flow towards public, low level users.
In this work, we consider a formulation of noninterference, called \emph{robust noninterference}, in which the system has one trusted system component \T{} parametrized over a secret $\secret$ while everything else is controlled by an attacker \A{}, representing the untrusted execution environment in which \T{} is executed~\cite{abate2018exploring,busi2021securing}.

Given two secret values \secretl{} and \secretr{}, we require that the systems under attack \SAl{}, \SAr{} cannot be distinguished and, consequently, that there is no information leakage of the secret values towards the attacker. Since (robust) noninterference is extremely popular and well understood in deductive approaches, we use it as our reference property.

\myskip
\subsubsection*{Model validation and Fuzzing}
Other important techniques related to our work are model validation and fuzzing.
\citet{ValidationSide,ValidationSide2} have proposed techniques to validate side-channel observational models of computer architectures.
While we look for discrepancies between automatically extracted models of an actual processor under different attackers, they look for discrepancies between an actual processor and (user guided refinements of) its ISA level model.
The automata we adopt here for model learning are very similar to the observation models of these papers but the aim is different, since they
validate existing models devised by hand with respect to the real system.
Here, we take the opposite approach by directly learning the model from the real system.
Fuzzing is a very popular technique for finding bugs and vulnerabilities on real systems and devices \cite{FuzzingBook}. In a recent work, \citet{FuzzingHardware} have proposed a technique  that uses standard software fuzzing tools  for hardware fuzzing. Our approach has some similarities with grammar-based black-box fuzzing as, e.g., \citet{FuzzingGrammar}, since we also use regular expressions to specify generic trusted and untrusted code that is subsequently used in the learning phase. However, our technique differs significantly from fuzzing in that it aims to generate a model of a system rather than collecting and analyzing crashes. This difference is fundamental because, as observed by \citet{clarkson2010hyperproperties}, noninterference-like properties are hyperproperties, i.e., properties of system behavior, that cannot be verified simply by looking for problematic traces that lead to crashes, as it is typically done for safety properties.

\section{Overview}\label{sec:overview}

In this section we introduce our verification methodology for detecting side-channel attacks or proving their absence in Sancus, with a given statistical confidence.
As far as we know, our proposal is the first that applies black-box automata learning, a technique that models real systems by interacting with them, in the context of side-channels attacks.

Sancus is a security architecture that provides enclaved execution, i.e., a protected memory in which trusted code and data are encapsulated and isolated from the rest on the system~\cite{noorman2017sancus}.
This allows the code to be executed in a protected environment so that both data confidentiality and integrity are, in principle, guaranteed (see~\cref{sec:related} for more detail).
For our analysis of Sancus, we consider two components: a set $A$ of attacks, and a set $T$ of trusted, enclaved programs which are parametrized over a secret $\secret \in \mb{S}$.
Let $\mi{Action}_A$ and $\mi{Action}_T$ denote the available actions for $A$ and $T$, respectively.
Let $\mc{O}$ represent the finite set of observable actions that the system can produce as a result of the aforementioned actions.
Let \SA{A}{T}{\secret} denote the execution of $T(\secret)$ under attack $A$ in Sancus represented by the set of execution traces over
$\mi{Action}_A, \mi{Action}_T$, and $\mc{O}$. Let $\oequiv$ denote  a weak trace equivalence: $\SA{A}{T}{s} \oequiv \SA{A'}{T'}{s'}$ iff the execution traces of \SA{A}{T}{s} and \SA{A'}{T'}{s'} are the same up to some set of silent observables (see~\cref{sec:checking}).
At present, it is irrelevant whether or how these execution traces can be computed, or if the trace equivalence check is feasible or not. In \cref{sec:learning},
we will delve into a technique for automated learning of
\SA{A}{T}{s} behavior and its associated traces. This will be accomplished given specific
$A, T$ and $s$, and the resulting behavior will be represented using a finite-state Mealy machine.

A way of expressing that a system has no side-channels is to prove that any trusted component $t \in T$ is able to keep $\secret \in \mb{S}$ confidential under any attack in $A$:
\begin{definition}[RNI]\label{def:RNI}
    Sancus satisfies \emph{robust noninterference under T and A} ($\NI{A}{T}{\mb{S}}$) iff for any
    $s_0, s_1 \in \mb{S}$
    \[
        \SA{A}{T}{s_0} \oequiv \SA{A}{T}{s_1}.
    \]
\end{definition}
Depending on the instantiation of \emph{sets} $A$, $T$, and \mb{S} we may end up with very different notions of \NI{A}{T}{\mb{S}}.
For instance, one could consider a basic attacker
$\Abasic$
that simply initiates the trusted component, observes its behavior, and waits for its termination.
\NI{\Abasic}{T}{\mb{S}} rejects trivially vulnerable programs that \emph{spontaneously} leak the secret $\secret$ without any active intervention from the attacker.
Inspired by~\citet{abate2018exploring}, we choose \NI{\Abasic}{T}{\mb{S}} as a baseline to distinguish interesting and potentially dangerous leakages from the others.
Let \Abasic be the basic attacker and \Aadvanced to be a richer, active attacker.
We define our new security notion as:
\begin{definition}[PRNI]\label{def:pres-NIA}
    Sancus satisfies the \emph{preservation of robust noninterference under \Abasic, \Aadvanced, and T} ($\PNI{\Abasic}{\Aadvanced}{T}{\mb{S}}$) iff for any
    $s_0, s_1 \in \mb{S}$
    \begin{equation*}\label{prop:fullinstance}
        \SA{\Abasic}{T}{s_0} \oequiv \SA{\Abasic}{T}{s_1} \Rightarrow \SA{\Aadvanced}{T}{s_0} \oequiv \SA{\Aadvanced}{T}{s_1}.
    \end{equation*}
\end{definition}
This property is very relevant in practical applications, as it allows developers of trusted components to work with just the capabilities of \Abasic in mind, without the need of worrying about what \Aadvanced might do.
Also, observe that different instantiations of \Abasic, \Aadvanced and $\oequiv$ allow us to specify different kind of side-channels and, thus, different threat models.
Furthermore, it is also important to notice that \Abasic, \Aadvanced, and T define sets of attacks or trusted programs, making PRNI a security notion for families of programs rather than individual ones.

Our analysis revolves around three phases: a \emph{specification phase}, a \emph{learning phase}, and a \emph{checking phase}.
In the first phase we define the security properties we are interested in, by specifying T, \Abasic, \Aadvanced, and $\oequiv$.
In the second phase we use the outputs of the first phase and an automata learning algorithm to learn a behavioral model of the trusted component under \Abasic and \Aadvanced.
Finally, in the third phase, we compare the learned models to detect any violation to the above preservation notion.

In order to explain the technique more in detail we will use a simple running example that,
beyond helping in the general understanding of the technique, will produce a clearly vulnerable program that is directly rejected by \NI{\Abasic}{T}{\mb{S}}.
In~\cref{sec:sancus} we will provide more significant cases of trusted programs that satisfy the \NI{\Abasic}{T}{\mb{S}} baseline but do not satisfy~\cref{def:pres-NIA}, corresponding to practical attacks to the Sancus architecture.

%
\medskip
\begin{example}[A simple timing attack]
\label{ex:timing1}
We consider a typical example of a timing attack, in which the branches of a comparison have a different execution time and, since the branch depends on the secret value \secret, the attacker can infer it, breaking confidentiality.
Only when \lstinline|s| is zero the value \lstinline|42| is written into another secret location \lstinline|&TMP|.
In both branches the program exits from the enclave by jumping to \lstinline|enclave_end|.
\begin{lstlisting}
enclave {
    cmp s, #0;       /* compare secret s with 0    */
    jnz enclave_end; /* if s != 0 exit enclave     */
    mov #42, &TMP;   /* only performed when s is 0 */
    jmp enclave_end  /* exit the enclave           */
}
\end{lstlisting}
Since \lstinline|TMP| is in the protected memory, the attacker cannot directly observe whether the comparison was successful or not.
The trusted component $T(\secret)$ of~\cref{def:pres-NIA} corresponds to this enclaved code and data where we let $ s \in \mb{S} = \{ 0, 1 \}$.

Consider now a basic attack $\Abasic$ that runs the enclave and observes its execution time.
The result of the comparison is clearly leaked because the \lstinline|mov| command is executed only when \lstinline|s| is zero, taking strictly more time than the other branch, which does nothing.
\end{example}
\medskip
\noindent
\cref{ex:timing1} represents a buggy trusted component $T(\secret)$ which indirectly leaks \secret when executed in  Sancus.
However, the Sancus enclaved execution does not protect against this type of issue. Therefore, this example does not depict an attack on Sancus but rather a flawed program. Programmers using Sancus need to be mindful that branches in
$T(\secret)$ depending on $\secret$ should always maintain a balance in terms of execution time.
Our analysis can be applied to much more interesting cases where the program in the enclave is regarded as secure, but the underlying architecture has vulnerabilities, as we will discuss in detail in~\cref{sec:sancus}.

To improve readability, in the following we will use the abstract action \lstinline|ubr| that is translated into the unbalanced branch (lines 3,4,5 of the enclave of \cref{ex:timing1}), so that the above code can be conveniently rewritten as:
\begin{lstlisting}
enclave {
    cmp s, #0; /* compare secret s with 0          */
    ubr        /* unbalanced execution time branch */
}
\end{lstlisting}
%
We now analyze the execution of enclave of~\cref{ex:timing1}  in the latest version of Sancus\footnote{Commit \lastcommit in~\cite{,mindthegap,mindthegap-repo}.}, and we illustrate the three phases of our method: specification, learning and checking.

\subsection{Specification Phase}\label{sec:specification}
This phase requires the precise definition of the families of  attacks and trusted components, and the observation power.

\myskip
\subsubsection*{Threat model}

This step amounts to precisely state the basic attacker \Abasic and the advanced attacker \Aadvanced.
These two components will constitute the threat model for the subsequent learning phase.
A Sancus attacker specification is composed of various sections that correspond to different attacker components, each of which is defined by a regular expression over the alphabet of actions $\mi{Actions}_A$.
A section may be declared inactive using the empty string identifier \lstinline+eps+, or may combine different actions using a regular expression-like syntax that supports choice `\lstinline+|+', sequencing `\lstinline+;+', and finite repetition `\lstinline+*+'.
Sections are of three different kinds: the \lstinline|isr| (interrupt service routine) section defines the code that may appear in the interrupt-service routine, executed when a timer interrupt is handled; the \lstinline|prepare| and \lstinline|cleanup| sections define the attacker behavior before and after the execution of an enclave, respectively.
Formally:
\begin{align*}\small
    \alpha &\in \mi{Actions} \qquad
    b \Coloneqq \lstinline+eps+ \mid \alpha \mid b\lstinline+;+ b \mid b \lstinline+|+ b \mid b\lstinline+*+\\
    A &\Coloneqq \lstinline+isr+\ \{ \,b\, \} \lstinline+; prepare+\ \{ \,b\, \} \lstinline+; cleanup+\ \{ \,b\, \}
\end{align*}
The definition of the language $\mc{L}(b)$ of a regular expression $b$ is standard, hence omitted.
Notice that, $\mc{L}(b)$ represents finite  sequences of $\mi{Actions}_A$ actions that can be thought as sequential programs. 
Thus, the actual attacks associated with the specification $\lstinline+isr+\ \{ \,b_i\, \} \lstinline+; prepare+\ \{ \,b_p\, \} \lstinline+; cleanup+\ \{ \,b_c\, \}$ is the set of all programs of the form $\lstinline+isr+\ \{ \,\overline{\alpha}_i\, \} \lstinline+; prepare+\ \{ \,\overline{\alpha}_p\, \} \lstinline+; cleanup+\ \{ \,\overline{\alpha}_c\, \}$ with $\overline{\alpha}_i \in \mc{L}(b_i)$,  $\overline{\alpha}_p \in \mc{L}(b_p)$, and $\overline{\alpha}_i \in \mc{L}(b_c)$.
We abuse the notation and write $a \in A$ to denote that the attack $a$ is associated to $A$.

\begin{figure}[tb]
\begin{lstlisting}
isr { eps };             $\!$/* empty section        */

prepare {
    start_counting 256; /* start an 8-bit timer */
    create_enclave;     /* create the enclave   */
    jin enclave_start   /* start the enclave    */
};

cleanup  { eps }        /* empty section        */
\end{lstlisting}
\caption{\small The basic attacker $\Abasic$.}
\label{fig:basicattacker}
\end{figure}

\myskip
\begin{example}[Threat model, basic attacker \Abasic]
\label{ex:attacker}
Consider again our running example that  indirectly leaks the secret bit \lstinline|s| through a timing side-channel. Since the code is insecure, we consider a basic attacker that simply starts a timer, creates the enclave and jumps into it.
This is done in the \lstinline|prepare| section, and the other two sections are empty, as shown in~\cref{fig:basicattacker}.
Notice that, \lstinline|create_enclave| creates an enclave using the code specified for \T{} as, e.g., the one in~\cref{ex:timing1}.
It is a shortcut for \createenclave, which specifies the memory boundaries of the enclave code and data.
\end{example}

\myskip
\subsubsection*{Trusted component}
The trusted component specification is composed of just one section called \lstinline+enclave+ that describes the body of the trusted component and is defined by a regular expression over an alphabet of actions $\mi{Actions}_T$.
Similarly to attacker components, the section may be declared inactive using the empty string identifier \lstinline+eps+, or may combine different actions using a regular expression-like syntax:
\begin{align*}
    \alpha &\in \mi{Actions}_T \qquad
    b \Coloneqq \lstinline+eps+ \mid \alpha \mid b\lstinline+;+ b \mid b \lstinline+|+ b \mid b\lstinline+*+\\
    T &\Coloneqq \lstinline+enclave+\ \{ \,b\, \}
\end{align*}
The set of programs defined by $T$ is defined analogously to that of $A$ above, so we omit the definition here.
\begin{example}[Trusted component]
A simple example of a trusted component specification is~\cref{ex:timing1} that specifies a single trusted component, vulnerable to a timing attack.
\end{example}
\myskip
\subsubsection*{Abstract behavior}
In order to build a model of  executions it is necessary to define an \emph{abstract behavior} specification $\mc{B}$ of the system. 
Intuitively, $\mc{B}$ maps the concrete system outputs into the more abstract observables $\mc{O}$, capturing the security-relevant behavior of the system, for a particular analysis.

\myskip
\begin{example}[Abstract behavior]
\label{ex:abstract}
In order to analyze \cref{ex:timing1} we need to detect timing differences inside enclaves of Sancus, thus our abstract behavior maps the internal CPU clock and the CPU mode (i.e., whether it is executing an enclave or not) into the following abstract observables:
\begin{itemize}
    \item $\mi{JmpIn}$, denoting that the CPU transferred the control from the attacker to the enclave;
    \item $\mi{JmpOut}\ k$, denoting that the CPU took $k$ cycles to execute the given input and to transfer the control from the enclave back to the attacker;
    \item $\mi{Time}\ k$, denoting that the CPU took $k$ cycles to execute the input;
    \item $\mi{TimerA\ k}$, a timer controlled and observed by the attacker used to detect timing side-channels.
\end{itemize}
The complete abstract behavior of Sancus will extend the one above (cf.~\cref{sec:attackers_enclaves}).
\end{example}

\subsection{Learning Phase}
\label{sec:learning}

The goal of this phase, is to build a deterministic finite-state Mealy machine \modelA{A}{T}{s} which describes the behavior of \SA{A}{T}{s}.
Performing the learning phase for \Abasic, \Aadvanced, T and for any $s \in \mb{S}$ it is possible to formally verify that the specified system satisfies~\cref{def:pres-NIA}, as we will see in \cref{sec:checking}.
%
\tikzset{
    every node/.style={
        on grid=false,
        node distance=1.5cm,
        align=center,
        thick},
    specs/.style={
        shape=tape,
        draw,
        fill=white,
        thick,
        tape bend top=none,
        double copy shadow},
    spec/.style={
        shape=tape,
        draw,
        fill=white,
        thick,
        tape bend top=none,
        align=center},
    predproc/.style={
        rectangle split,
        rectangle split parts=3,
        draw,
        thick,
        rectangle split horizontal=true,
        rectangle split ignore empty parts=false,
        rectangle split empty part width=0
    },
    io/.style={
        shape=trapezium,
        draw,
        thick,
        trapezium left angle = 65,
        trapezium right angle = 115,
        trapezium stretches
    }
}
Our learning phase is an instance of the active automata learning framework proposed in~\cite{angluin1987learning}.
The framework assumes a learning algorithm, the \emph{learner}, that interacts with a \emph{teacher} acting as an intermediary between the learner and the system that implements a hidden Mealy automaton.
The learner tries to infer the hidden automaton by repeatedly asking two types of queries to the teacher:
\begin{enumerate}[(i)]
    \item \emph{Membership queries:} the learner asks whether a sequence of actions in $(\mi{Action}_T \uplus \mi{Action}_A)^*$ belongs to those accepted by the hidden automaton;
    \item \emph{Equivalence queries:} the learner asks the teacher if an automaton accepts the same language as the hidden one.
\end{enumerate}
This phase is rendered as an \ocaml program (cf. \cref{sec:alvie}) that implements a modified version of the recently-proposed \Lsharp{} algorithm~\cite{vaandrager2022new}, and interacts with the real system under attack, i.e., \SA{A}{T}{s}, to build the required model.
In the ideal case, that is when the equivalence check is implemented using an oracle, the following correctness theorem holds:
%
\begin{theorem}[\Lsharp correctness~\cite{vaandrager2022new}]\label{thm:ideal-correctness}
    Assume that the behavior of the system under specifications $A$ and $T$ is expressible as a Mealy machine $m$.
    The algorithm \Lsharp terminates and returns a model \modelA{A}{T}{s} that recognizes the same language as $m$.
\end{theorem}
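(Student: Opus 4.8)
The statement is the standard correctness guarantee for active automata learning in the minimally-adequate-teacher model, specialized to Mealy machines and imported from \citet{vaandrager2022new}, so the plan is to reconstruct that argument. The central object is the \emph{observation tree} maintained by \Lsharp: a partial, tree-shaped Mealy machine whose nodes record the outputs observed for the membership queries asked so far, together with an \emph{apartness} relation $\#$ that witnesses, via a concrete input suffix, that two nodes must correspond to distinct states of the hidden machine $m$. First I would fix the loop invariant maintained by the learner: at every round it distills from the tree a \emph{basis} of pairwise-apart nodes, folds the remaining \emph{frontier} nodes onto the basis, and emits a hypothesis Mealy machine $H$ that is consistent with every observation collected so far.

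With the invariant in place the two halves of the claim separate cleanly. For \emph{correctness upon termination} I would invoke the ideal-case assumption directly: the loop exits only when an equivalence query returns ``yes'', and in the ideal case this query is answered by an exact oracle, so the final $H = \modelA{A}{T}{s}$ recognizes precisely the language of $m$ by definition of that oracle. All the real work is therefore in \emph{termination}, which I would argue by exhibiting a bounded, strictly increasing progress measure, namely the size of the basis. Soundness of apartness gives that any two basis nodes are genuinely inequivalent in $m$, so the basis size is bounded above by the number $n$ of states of the minimal Mealy machine realizing the behavior of \SA{A}{T}{s}; this is exactly where the hypothesis that the behavior \emph{is} expressible as a finite Mealy machine is used. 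Then I would show that each failing equivalence query makes strict progress: the returned counterexample, fed through \Lsharp's counterexample-analysis routine, exposes an input suffix along which a frontier node is apart from its assigned basis node, promoting a node to the basis and increasing its size by one. Since the basis can grow at most $n$ times, at most $n$ counterexamples are ever returned, so the loop runs for finitely many rounds and the algorithm halts.

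The step I expect to be the crux is this strict-progress claim, i.e.\ that every counterexample yields a \emph{new} apart basis node rather than merely re-confirming existing distinctions. This requires the counterexample-analysis procedure (a Rivest--Schapire-style search for the point at which $H$ and the real system diverge) to always return a suffix that is genuinely discriminating in the observation tree, together with determinism and completeness of the underlying system so that the membership queries used in that search are well defined and reproducible. Establishing this progress lemma and pairing it with the soundness of $\#$ to close the $n$-bounded measure is the heart of the proof; the remaining bookkeeping (tree consistency and the folding of the frontier onto the basis) is routine and can be discharged by the invariants already cited from \citet{vaandrager2022new}.
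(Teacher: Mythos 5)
The paper does not actually prove this theorem: it is imported wholesale from \citet{vaandrager2022new}, and the only argument offered anywhere in the text is the citation itself (the proof sketch that follows in the paper concerns the PAC variant, not this ideal-case statement). So there is no in-paper proof to compare against; what you have written is a blind reconstruction of the argument in the cited source, and as such it is essentially faithful: observation tree, apartness relation, basis/frontier decomposition, soundness of apartness bounding the basis by the state count $n$ of the minimal machine realizing the behavior (which is indeed where the ``expressible as a Mealy machine'' hypothesis is consumed), and an exact equivalence oracle closing the correctness half. One point is stated too strongly: you claim that every failing equivalence query promotes a frontier node to the basis, so that the basis size alone is the progress measure and at most $n$ counterexamples occur. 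In \Lsharp{} a processed counterexample is only guaranteed to establish a \emph{new apartness pair} between a frontier node and the basis node it was folded onto; promotion happens only once that frontier node becomes apart from \emph{all} basis candidates, and the termination measure is the lexicographic combination of basis size and the shrinking candidate sets of the frontier nodes. Your bound still goes through with this refined measure (both components are bounded in terms of $n$ and the tree), and you correctly identify the counterexample-analysis progress lemma as the crux, so the gap is one of precision rather than of substance.
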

%
Answering to equivalence queries, i.e., establishing an equivalence between the candidate model and the behavior of a system under A and T, is undecidable in general.
For this reason, various approximate approaches have been proposed in the literature~\cite{angluin1987learning,muskardin2022aalpy}.
We resort to a \emph{Probably Approximately Correct (PAC)}~\cite{angluin1987learning,muskardin2022aalpy} oracle, which looks for a counterexample to the equivalence between the model and the system by systematically sampling an increasing number of input queries.
More precisely, it takes as parameters $\epsilon > 0$ and $\delta > 0$ and at its $r$-th invocation it tries to distinguish the model and the system using $\lceil \frac{1}{\epsilon}(\ln{\frac{1}{\delta}} + r\ln{2}) \rceil$ randomly selected input sequences.
We call $(\delta, \epsilon)$-\Lsharp the learning algorithm modified to use such an approximated equivalence oracle that provides precise probabilistic bounds on the correctness of the model.
Let $\mc{D}$ be the probability distribution on the set of possible action sequences according to $A$ and $T$.
We say that $\mu$ is an $\epsilon$-approximation of $m$ iff the sum of the probabilities according to $\mc{D}$ of sequences accepted by $\mu$ and not by $m$ (or vice versa) is at most $\epsilon$, i.e., the probability of an error in the model is at most $\epsilon$.
%
\begin{theorem}[PAC learning correctness]\label{thm:pac-pni}
Assume that the behavior of the system under specifications $A$ and $T$ is expressible as a Mealy machine $m$,
and that the sampling in the PAC oracle is performed according to $\mc{D}$.
Then, $(\delta, \epsilon)$-\Lsharp terminates and with probability at least $1-\delta$ it returns a model \modelA{A}{T}{s} that is an $\epsilon$-approximation of $m$.
\end{theorem}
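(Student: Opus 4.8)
The plan is to follow the classical Angluin-style PAC argument, adapting the exact-learning guarantee of \cref{thm:ideal-correctness} to the sampling oracle, and to split the claim into two parts: termination, and the probabilistic $\epsilon$-approximation bound.

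For termination, I would first observe that $(\delta,\epsilon)$-\Lsharp differs from the ideal \Lsharp only in \emph{how} equivalence queries are answered; the learner's internal bookkeeping (the observation tree and the way counterexamples trigger state splits) is unchanged. The key structural fact I would invoke is that every counterexample returned by the PAC oracle is a \emph{genuine} discrepancy between the current hypothesis and the real system, and that each such counterexample forces \Lsharp to add at least one state to its hypothesis. Since any hypothesis built by \Lsharp is consistent with the observations and hence has at most as many states as $m$, the number of rounds in which the oracle returns a counterexample is bounded by the finite number of states of $m$. Consequently, after finitely many rounds either the oracle reports no counterexample (and the algorithm halts), or the hypothesis has become isomorphic to $m$, at which point no counterexample exists and the next equivalence query necessarily halts the algorithm. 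This yields termination.

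For the approximation guarantee, I would index the rounds by $r = 1,2,\dots$ and let $\mu_r$ denote the hypothesis submitted to the oracle at round $r$. Define the bad event $B_r$ as ``the algorithm terminates at round $r$ while $\mu_r$ is \emph{not} an $\epsilon$-approximation of $m$''. Conditioning on the whole history up to round $r$, which determines $\mu_r$, and using that the $n_r := \lceil \frac{1}{\epsilon}(\ln\frac{1}{\delta} + r\ln 2)\rceil$ samples of round $r$ are drawn i.i.d.\ from $\mc{D}$ independently of that history, I would bound the conditional probability that every sample misses the discrepancy: if $\mu_r$ is not an $\epsilon$-approximation then, by the definition of $\epsilon$-approximation, a single sample lands in the disagreement region with probability at least $\epsilon$, so the probability that none of the $n_r$ samples does is at most $(1-\epsilon)^{n_r} \le e^{-\epsilon n_r}$. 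Substituting the sampling schedule gives $e^{-\epsilon n_r} \le \delta\,2^{-r}$, hence $\Pr[B_r] \le \delta\,2^{-r}$.

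Finally I would combine the rounds by a union bound: since the algorithm can terminate incorrectly only through some $B_r$, we obtain $\Pr[\text{bad output}] \le \sum_{r\ge 1}\Pr[B_r] \le \delta \sum_{r\ge 1} 2^{-r} = \delta$, so the returned model is an $\epsilon$-approximation with probability at least $1-\delta$. The main obstacle I anticipate is the probabilistic part rather than the combinatorial one: specifically, handling the dependence of each $\mu_r$ on the randomness of earlier rounds, which forces the conditioning argument above rather than a naive appeal to independence, and checking that the geometric schedule baked into $n_r$ is exactly what makes the union bound telescope to $\delta$ instead of a larger constant.
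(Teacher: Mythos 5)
Your proposal is correct and follows essentially the same route as the paper, which gives only a two-line sketch deferring termination to the oracle-agnostic termination proof of \Lsharp and the probabilistic bound to Angluin's original sampling argument; your write-up simply fills in those two cited arguments (counterexamples force state splits bounded by $|m|$, and the per-round bound $(1-\epsilon)^{n_r}\le\delta 2^{-r}$ combined by a union bound). The conditioning-on-history point you flag is exactly the technical care Angluin's argument requires, so nothing is missing.
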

%
\begin{proof}[Proof (Sketch)]
Termination follows by the termination of \Lsharp~\cite{vaandrager2022new}, which does not make any assumption on the way the equivalence oracle is implemented.
Instead, the probabilistic guarantees are easily recovered from the original results by~\citet{angluin1987learning} on automata learning by sampling.
\end{proof}
In order to sample input sequences to be used during the execution of $(\delta, \epsilon)$-\Lsharp, we devised a simple sampling algorithm that incrementally builds an actual attack $a \in A$ and a trusted component $t \in T(\secret)$.
Starting from the empty $a$ and $t$ and an empty execution trace $\beta$ we repeatedly extend $\beta$ with a new admitted action $b$ (if any), and accordingly extend $a$ and $t$ when necessary.
More precisely, we first select $b$ uniformly at random from the set of actions accepted by the specification after executing $\beta$. 
To decide if $b$ is accepted by the specification, we assume to know which section of the specification is currently being executed by Sancus, and use this information to check the admissibility of $b$ on the appropriate regular expression.
Then, we check whether the new execution trace $\beta b$ is already in $a$ or $t$.
If that is the case, we maintain $a$ and $t$, otherwise we extend $a$ or $t$ by appending $b$ to the actions of the currently-executing section.
Each extension of $\beta$ is done with a certain probability, so we do not impose any fixed constraint on the length of sequences and the expected running time of the generation is finite.
This simple process induces a probability distribution on the language accepted by the specifications $A$ and $T(s)$ as required by the formal results of \cref{sec:checking}.
%
%
%
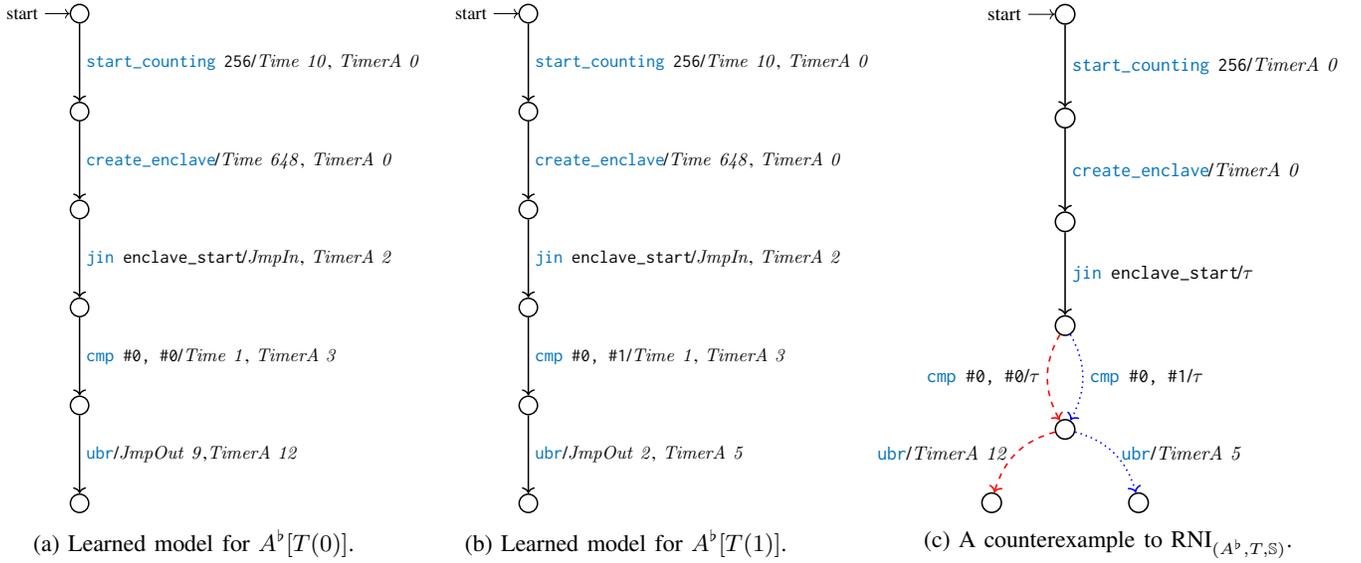
\begin{figure*}
    \centering
     \begin{subfigure}[t]{0.6\columnwidth}
     \centering
        \resizebox{!}{6.8cm}{
        \begin{tikzpicture}
            \node (s0) [state, initial,inner sep=0pt, minimum size=1em] {};
            \node (s1) [state, below = of s0,inner sep=0pt, minimum size=1em] {};
            \node (s2) [state, below = of s1,inner sep=0pt, minimum size=1em] {};
            \node (s3) [state, below = of s2,inner sep=0pt, minimum size=1em] {};
            \node (s4) [state, below = of s3,inner sep=0pt, minimum size=1em] {};
            \node (s5) [state, below = of s4,inner sep=0pt, minimum size=1em] {};

            \path [->, thick] (s0) edge node[right] {\lstinline{start\_counting 256}/\mi{Time\ 10}, \mi{TimerA\ 0} } (s1);
            \path [->, thick] (s1) edge node[right] {\lstinline{create\_enclave}/\mi{Time\ 648}, \mi{TimerA\ 0} } (s2);
            \path [->, thick] (s2) edge node[right] {\lstinline{jin enclave\_start}/\mi{JmpIn}, \mi{TimerA\ 2}} (s3);
            \path [->, thick] (s3) edge node[right] {\lstinline{cmp \#0, \#0}/\mi{Time\ 1}, \mi{TimerA\ 3}} (s4);
            \path [->, thick] (s4) edge node[right] {\lstinline|ubr|/\mi{JmpOut\ 9},\mi{TimerA\ 12}} (s5);
        \end{tikzpicture}
        }
        \caption{\small Learned model for \SA{\Abasic}{T}{0}.}
        \label{fig:learning1}
     \end{subfigure}
     \hfill
     \begin{subfigure}[t]{0.55\columnwidth}
        \centering
           \resizebox{!}{6.8cm}{
           \begin{tikzpicture}
            \node (s0) [state, initial,inner sep=0pt, minimum size=1em] {};
            \node (s1) [state, below = of s0,inner sep=0pt, minimum size=1em] {};
            \node (s2) [state, below = of s1,inner sep=0pt, minimum size=1em] {};
            \node (s3) [state, below = of s2,inner sep=0pt, minimum size=1em] {};
            \node (s4) [state, below = of s3,inner sep=0pt, minimum size=1em] {};
            \node (s5) [state, below = of s4,inner sep=0pt, minimum size=1em] {};

            \path [->, thick] (s0) edge node[right] {\lstinline{start\_counting 256}/\mi{Time\ 10}, \mi{TimerA\ 0} } (s1);
            \path [->, thick] (s1) edge node[right] {\lstinline{create\_enclave}/\mi{Time\ 648}, \mi{TimerA\ 0}} (s2);
            \path [->, thick] (s2) edge node[right] {\lstinline|jin enclave\_start|/\mi{JmpIn}, \mi{TimerA\ 2}} (s3);
            \path [->, thick] (s3) edge node[right] {\lstinline{cmp \#0, \#1}/\mi{Time\ 1}, \mi{TimerA\ 3}} (s4);
            \path [->, thick] (s4) edge node[right] {\lstinline{ubr}/\mi{JmpOut\ 2}, \mi{TimerA\ 5}} (s5);
           \end{tikzpicture}
           }
        \caption{\small Learned model for \SA{\Abasic}{T}{1}.}
                \label{fig:learning2}
     \end{subfigure}
     \hfill
    \begin{subfigure}[t]{0.75\columnwidth}
        \centering
        \resizebox{!}{6.8cm}{\
            \begin{tikzpicture}
                \node (s0b) [state, initial, inner sep=0pt, minimum size=1em] {};
                \node (s0) [state, below = of s0b, inner sep=0pt, minimum size=1em] {};
                \node (s1) [state, below = of s0, inner sep=0pt, minimum size=1em]       {};
                \node (s2) [state, below = of s1, inner sep=0pt, minimum size=1em]       {};
                \node (s3) [state, below = of s2, inner sep=0pt, minimum size=1em]       {};
                \node (s4) [state, below left = of s3, inner sep=0pt, minimum size=1em]       {};
                \node (s5) [state, below right = of s3, inner sep=0pt, minimum size=1em]       {};

                \path [->, thick] (s0b) edge node[right] {\lstinline{start\_counting 256}/\mi{TimerA\ 0}} (s0);
                \path [->, thick] (s0) edge node[right] {\lstinline{create\_enclave}/\mi{TimerA\ 0}} (s1);
                \path [->, thick] (s1) edge node[right] {\lstinline{jin enclave\_start}/$\tau$} (s2);
                \path [->, thick, dashed, bend right, color=red, text=black] (s2) edge node[left] {\lstinline{cmp \#0, \#0}/$\tau$} (s3);
                \path [->, thick, dotted, bend left, color=blue, text=black] (s2) edge node[right] {\lstinline{cmp \#0, \#1}/$\tau$} (s3);
                \path [->, thick, dashed, bend right, color=red, text=black] (s3) edge node[left] {\lstinline{ubr}/\mi{TimerA\ 12}} (s4);
                \path [->, thick, dotted, bend left, color=blue, text=black] (s3) edge node[right] {\lstinline{ubr}/\mi{TimerA\ 5}} (s5);
            \end{tikzpicture}
        }

        \caption{\small A counterexample to \NI{\Abasic}{T}{\mb{S}}.
         }\label{fig:tree-toy}
    \end{subfigure}

     \caption{\small The results of the learning and checking phases of~\cref{ex:learning,ex:checking}.}\label{fig:learned-toy}
\end{figure*}

\myskip
\begin{example}[Model learning]
\label{ex:learning}
We ran the learning phase on our running example using the specification $T$ of~\cref{ex:timing1}, the attacker \Abasic of \cref{ex:attacker}, and the abstract behavior of~\cref{ex:abstract}, obtaining the two models reported in
\cref{fig:learning1,fig:learning2}.
The transitions are labelled with the input action and the  corresponding output abstract actions separated by the `/' symbol.
E.g., \lstinline{jin enclave_start}/\mi{JmpIn},\mi{TimerA\ 2} is the start of the enclave which is observed as the abstract action \mi{JmpIn}.
The attacker timer \mi{TimerA} has value $2$.
The only difference is in the unbalanced branch input \lstinline|ubr| since the execution time is uneven in the two branches.
In fact, the \lstinline|mov| command is executed only when \lstinline|s| is zero (cf.~\cref{ex:timing1}).
This is respectively  observable through the \mi{JmpOut\ 9},\mi{TimerA\ 12} and \mi{JmpOut\ 2},\mi{TimerA\ 5} abstract outputs reporting that the execution of the unbalanced branch takes 9 clock cycles in \modelA{\Abasic}{T}{0} and 2 clock cycles in \modelA{\Abasic}{T}{1}.
The 7 clock cycles of difference are due to the extra \lstinline|mov| and \lstinline|jmp| that take, respectively, 5 and 2 clock cycles.
The attacker can observe this difference using \mi{TimerA} as explained in the next section.


\end{example}




\subsection{Checking Phase}
\label{sec:checking}

During this phase we check whether the learned models adhere to~\cref{def:pres-NIA}, catching violations to the preservation of \PNI{\Abasic}{\Aadvanced}{T}{\mb{S}}.
%
%
%
Let $\mc{O}_\tau \subset \mc{O}$ to be a set of observables that the attacker cannot directly observe.
Ignoring silent observables, we check for any violation of \PNI{\Abasic}{\Aadvanced}{T}{\mb{S}}, i.e., we check for violations to \NI{\Aadvanced}{T}{\mb{S}} in the learned models that are not present in \NI{\Abasic}{T}{\mb{S}}.%
\footnote{Technically, we implement this check using the \emph{weak trace preorder} of the mCLR2 model checker. See~\cref{sec:sancus} for more details.}
We call such differences in the observable behavior \emph{witnesses of interfering behavior}.

A reader may wonder why we need multiple silent observables rather than just one.
The idea is that $\mc{O}_\tau$ should define as silent those observables that are useful for a potential user to reconstruct an attack, but that may not be visible to the attacker, and thus whose appearance should not be considered interfering. 
Building on~\cref{thm:ideal-correctness}, it is trivial to prove that in the ideal case we verify \PNI{\Abasic}{\Aadvanced}{T}{\mb{S}} or provide a counterexample to it.

What can we say about comparing the models learned via $(\delta, \epsilon)$-\Lsharp for given $\delta$ and $\epsilon$?
\begin{theorem}\label{thm:pac-correctness}
Assume that for any given $s \in \mb{S}$ the behavior of $\SA{\Abasic}{T}{s}$ and of $\SA{\Aadvanced}{T}{s}$ are expressible as a Mealy machines, that $\mc{D}$ is the probability distribution on the set of possible action sequences according to $A$ and $T$, and that the sampling in the PAC oracle is performed according to $\mc{D}$.
Also, assume that for any $s \in \mb{S}$, $(\delta, \epsilon)$-\Lsharp learns \modelA{\Abasic}{T}{s} and \modelA{\Aadvanced}{T}{s} for \SA{\Abasic}{T}{s} and \SA{\Aadvanced}{T}{s}, respectively.
Let $\mc{C}_{(\Abasic, \Aadvanced, T)} \triangleq \forall s_0, s_1 \in \mb{S}.\ \modelA{\Abasic}{T}{s_0} \oequiv \modelA{\Abasic}{T}{s_1} \Rightarrow \modelA{\Aadvanced}{T}{s_0} \oequiv \modelA{\Aadvanced}{T}{s_1}$, i.e., a predicate that checks~\cref{def:pres-NIA} on learned models.

If $\mc{C}_{(\Abasic, \Aadvanced, T)}$, then with probability at least $(1-\delta)^4$ the probability that the checking phase does not make any error is at least $(1-\epsilon)^4$.
\end{theorem}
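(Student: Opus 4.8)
The plan is to reduce the statement to the per-model guarantee of \cref{thm:pac-pni} and to keep the two sources of randomness cleanly separated: the random coins of the learning runs, which govern the $(1-\delta)^4$ factor, and the $\mc{D}$-sampling of inputs used to compare observable behaviours, which governs the $(1-\epsilon)^4$ factor. First I would note that, for the two secrets $s_0,s_1$ and the two attackers $\Abasic,\Aadvanced$, the predicate $\mc{C}_{(\Abasic,\Aadvanced,T)}$ is evaluated on exactly four learned Mealy machines, $\modelA{\Abasic}{T}{s_0}$, $\modelA{\Abasic}{T}{s_1}$, $\modelA{\Aadvanced}{T}{s_0}$ and $\modelA{\Aadvanced}{T}{s_1}$, each obtained from an independent run of $(\delta,\epsilon)$-\Lsharp. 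Writing $m_1,\dots,m_4$ for the corresponding ground-truth machines, \cref{thm:pac-pni} gives for each $i$ that, with probability at least $1-\delta$, the learned model $\mu_i$ is an $\epsilon$-approximation of $m_i$. Since the four runs use independent randomness, the event $G$ that all four models are simultaneously $\epsilon$-approximations has probability at least $(1-\delta)^4$, which supplies the outer factor.

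Next I would condition on $G$ and analyse a single decision of the checking phase. Because $\mc{C}_{(\Abasic,\Aadvanced,T)}$ holds by hypothesis, the only way the phase can err is a false positive: it declares \cref{def:pres-NIA} satisfied while the ground-truth system violates it. The verdict is read off the observable traces of the four models through the two equivalence checks $\mu_1 \oequiv \mu_2$ and $\mu_3 \oequiv \mu_4$, so I would argue that it coincides with the verdict on $m_1,\dots,m_4$ whenever, on the sampled input, every learned model reproduces the observable output of its ground-truth machine. By the defining property of an $\epsilon$-approximation, for an input drawn from $\mc{D}$ each disagreement event $E_i=\{\mu_i \neq m_i\}$ has probability at most $\epsilon$; combining the four models then bounds the probability of a faithful verdict from below by $(1-\epsilon)^4$, which supplies the inner factor.

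The main obstacle I anticipate is justifying the product $(1-\epsilon)^4$ rather than the weaker union bound $1-4\epsilon$: this needs the four disagreement events $E_1,\dots,E_4$ to be independent, which is cleanest if the checking phase draws a fresh, independent $\mc{D}$-sample to exercise each of the four models. I would also have to make precise the transfer argument stating that agreement of every learned model with its ground-truth machine on the sampled inputs carries the trace-equivalence verdict $\mc{C}_{(\Abasic,\Aadvanced,T)}$ over to the true system; this is exactly where the definition of $\oequiv$ and of $\epsilon$-approximation must be aligned. A secondary subtlety is that $\Abasic$ and $\Aadvanced$ a priori induce different input distributions, so $\mc{D}$ should be understood per attacker, with each model's $\epsilon$-approximation property stated against the distribution actually used to sample it; once this bookkeeping is fixed the two factors multiply and the nested bound follows.
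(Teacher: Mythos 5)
Your proposal follows essentially the same route as the paper's own (very terse) proof: independence of the four learning runs yields the $(1-\delta)^4$ factor via \cref{thm:pac-pni}, and conditioning on all four models being $\epsilon$-correct, the checking phase is error-free at least when every learned model agrees with its system, giving $(1-\epsilon)^4$. The independence and transfer subtleties you flag are real but are glossed over in the paper's sketch as well, so your argument is, if anything, slightly more careful while structurally identical.
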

\newcommand{\Err}{\ensuremath{\mb{E}_{(\Abasic, \Aadvanced, T)}}}
\newcommand{\Errc}{\ensuremath{\mb{E}^c_{(\Abasic, \Aadvanced, T)}}}
\newcommand{\rc}{\ensuremath{c_{A, T, s}}}
\begin{proof}
    As a consequence of~\cref{thm:pac-pni} and since different PAC learning executions are one independent of the other, the probability that all four learned models involved in evaluating $\mc{C}_{(\Abasic, \Aadvanced, T)}$ are $\epsilon$-correct is $\geq (1 - \delta)^4$.

    Assume now that all these models are $\epsilon$-correct, which is the probability of making an error?
    We say that our checking procedure makes and error whenever $\mc{C}_{(\Abasic, \Aadvanced, T, {\mb{S}})}$ and $\lnot \PNI{\Abasic}{\Aadvanced}{T}{\mb{S}}$ or $\lnot\mc{C}_{(\Abasic, \Aadvanced, T, {\mb{S}})}$ and $\PNI{\Abasic}{\Aadvanced}{T}{\mb{S}}$.
    Let $\Err$ be the event occurring in case of such an error, and let $\Errc$ be its complementary event.
    It is easy to see that $\Errc$ occurs at least when all the learned models are equal to their respective system.
    Thus, we can conclude that $P (\Errc) \geq (1-\epsilon)^4$ with probability at least $(1 - \delta)^4$.
\end{proof}
%
%
%

%
%
\myskip
\begin{example}[Checking phase]
\label{ex:checking}
For this simple running example we illustrate our check on \NI{\Abasic}{T}{\mb{S}} but in~\cref{sec:sancus} we will leverage this functionality to check if \PNI{\Abasic}{\Aadvanced}{T}{\mb{S}}.
Here, the attacker can only observe $\mi{TimerA}$ when untrusted code is executed and every other observable is silent.
Notice that, all attack actions are the same, since the attacker is deterministic and shared by the two models.
Performing the checking phase over \modelA{\Abasic}{T}{0} and \modelA{\Abasic}{T}{1} of~\cref{ex:learning}, we obtain the counterexample to noninterference of~\cref{fig:tree-toy},
represented as a \emph{witness graph} in which arrows in common for both models are black, solid, those of \modelA{\Abasic}{T}{0} are red, dashed and those of \modelA{\Abasic}{T}{1} are blue, dotted.
In the witness graph, the violation to \NI{\Abasic}{T}{\mb{S}} is clearly visible: a different time taken by the enclaved execution measured by \mi{TimerA}: 12 on the left and 5 on the right.
Since time is different for the two secret inputs $0$ and $1$ the attacker can actually infer such a secret value.

\end{example}
%

\section{Bridging the Gap in Sancus}\label{sec:sancus}
In this section we report the results of our analysis of the Sancus architecture using the method of~\cref{sec:overview}.
Our goal is to systematically analyze the architecture in a black-box fashion and to discover possible issues or to prove it secure, both in past and in current versions of the architecture.
To this end, we automatically learn and compare different formal models of Sancus behavior under attack, looking for any violations to~\cref{def:pres-NIA}.
All the attackers $A$ and enclaves $T$ will be defined in the format of~\cref{sec:specification}.
The system under analysis will consist of a specific Sancus version running $\SA{A}{T}{s}$ for some $s \in \mb{S}$.
The \toolname tool and all specifications and results are available at~\cite{additionalmaterial}.

\subsection{The \toolname Tool}\label{sec:alvie}
\toolname (\toolacron) is a tool written in \ocaml that implements the method of \cref{sec:overview}, allowing users to specify attackers, learn models of behavior of the system under attack, compare the learned models, and synthesize executable Sancus programs directly from the attack traces found during comparisons.
Our experiments use the learning algorithm $(\epsilon, \delta)$-\Lsharp with $\epsilon = \delta = 0.01$ and were performed on a server equipped with an Intel(R) Xeon(R) E5-2699 v4 @ 2.20GHz with 42 CPUs and 128GB of RAM.
On our machine, running the experiments of this section takes about an hour for simpler specifications and a couple of days of bigger/complex instances.
This relatively short learning time comes from the fact that \toolname spawns multiple processes, one for each learning and checking task, making full use of the multicore features of our machine.

The development of \toolname posed some nontrivial challenges that we summarize below.

\myskip
\subsubsection*{Extracting system behavior}
We need to implement $\mc{B}$, observing and extracting a sequence of observables $\mc{O}$ from concrete executions of the Sancus simulator.
For this, we first take the input sequence (i.e., the sequence of attacker/system interactions that we want to study), and synthesize a Sancus program.
Then, we compile the program into binary code, which is loaded and executed by a Sancus simulator obtained using the Verilator tool~\cite{verilator}.
Verilator generates a cycle-accurate simulator from the real Verilog code of Sancus so, assuming Verilator's correctness, our results should be identical to the ones obtained interacting with an FPGA implementation, except for the performance.
The execution of the simulator produces a \emph{Value Change Dump} (VCD) file that our tool
analyzes in order to extract the abstract behavior of the input sequence.
We also need to keep track of abstract input actions compiled into multiple assembly instructions that may be interrupted by an attack.
To this aim, we tag each instruction composing an input action with a unique label and collect them into a \emph{pending} list.
When that specific instruction is executed, we add its abstract observable (derived from the contents of the VCD file) to the observed behavior and retire it from the list of pending labels.

\myskip
\subsubsection*{Optimization of the learning phase}
Efficiently learning the formal model is another demanding task.
To the best of our knowledge, the \ocaml ecosystem lacks an active automata learning library.
Thus, we decided to implement our own and chose the state-of-the-art algorithm \Lsharp~\cite{vaandrager2022new}, since it is relatively easy to understand and implement, especially with the help of the \ocaml type system.
Besides implementing the algorithm with efficient data structures, we have optimized its exploration of the input space by binding it to $A$ and $T$.
In particular, as explained in~\cref{sec:learning}, at each step the algorithm can choose to provide an input to the system and observes its output: the choice is guided by the knowledge of which inputs are legal and which are not, depending on the latest observations on the system and on the specifications of the attacker and the enclave.
Moreover, during the learning phase, we skip the so-called non-determinism check: if a given input trace has already  been executed on the simulator, its observable behavior is cached and reused unchanged instead of repeating the execution to validate that the CPU behaves deterministically.
In some cases this optimization has resulted in a $60\times$ speedup.
E.g., learning the models for the example in~\cref{sec:overview} now takes about $15$ seconds on our machine, before optimization it took about $900$ seconds.
Skipping the non-determinism check requires careful handling of the tool's internal data structures that are used during the learning to keep track of the internal state of the CPU.

\myskip
\subsubsection*{Generation of witness graphs}
We designed and implemented witness graphs that are easy to visualize and to interpret.
As illustrated in \cref{ex:checking}, witness graphs are built starting from the witnesses of violations to~\cref{def:pres-NIA} given by the mCRL2 model checker.
To collect all these witnesses one could think of invoking the model checker several times and integrating the results.
However, the number of calls to the model checker required to successfully collect all these witnesses may be too large to be manageable (or even infinite, for example, in the presence of resets).
To reduce this complexity, when a witness $W$ is found, we remove all the witnesses longer than $W$, and keep $W$ as a representative.
Intuitively, $W$ corresponds to the shortest attack of ``its kind''.
We also provide a tool that synthesizes executable code from witnesses of interfering behavior, thus allowing users to collect the witnesses and use them as regression tests when the version of the system changes.
We use the following notation: if an action is performed by the attacker it is preceded by \texttt{att}, otherwise it is preceded by \texttt{enc}.
For instance, in the case of the witness graph of~\cref{fig:tree-toy}, the witnesses of interfering behavior is automatically synthesized as follows:
\begin{lstlisting}
att: start_counting 256;
att: create <enc_s, enc_e, data_s, data_e>;
att: jin enc_s;
enc: cmp s, #0;
enc: ubr
\end{lstlisting}
\toolname is then able to automatically substitute \lstinline|s| with the secret, compile the code to assembly, run it on the simulator, and collect the results.
For example, it can correctly detect the difference between the two enclaves when it executes the above code: if \lstinline|s|$\ =0$ the output is $\mi{JmpOut}\ 2,\, \mi{TimerA}\ 12$, whereas if \lstinline|s|$\ =1$ leads to $\mi{JmpOut}\ 2,\, \mi{TimerA}\ 5$.

For the synthesis of the other attacks and anomalies, we refer the interested reader to the files \texttt{attack.ml} and \texttt{attacklib.ml} in the \texttt{alvie/code/test} folder of~\cite{additionalmaterial}.

\subsection{Attackers and Enclaves in Sancus}\label{sec:attackers_enclaves}
In this section we specify various attackers and enclaves for Sancus that we will use in the next section to automatically discover attacks.

\myskip
\subsubsection*{Attackers capabilities}
We consider a list of attacker capabilities that
are quite standard in the context of Trusted Execution Environments~\cite{vrased,sancus,sgx}, and merge ideas from formal methods and more applied studies~\cite{mindthegap,busi2021securing,sancus}.
%
\begin{itemize}
    \item\textbf{C1: Creation of enclaves.}
        Creating enclaves is of course fundamental for interacting with them.
        The action for creating enclaves is \createenclave{}, where \encs{} and \ence{} respectively denote the start and the end of the code of the enclave, and \datas{} and \datae{} respectively denote the start and the end of its data.
        All the locations in the ranges [\encs, \ence) and [\datas{}, \datae{}) are protected from external access (read/write/execution), except for \encs{} which is the entry point of the enclave and attackers can jump to it (cf. capability \textbf{C2}).
        To improve readability, in the examples we always use \lstinline|create_enclave| to denote \linebreak \createenclave.
    \item\textbf{C2: Starting enclaves with parameters.}
        Another crucial capability is to start an enclave and possibly pass data to it.
        Data is exchanged in Sancus via registers, so an attacker can use any instruction operating on registers, e.g., \lstinline|mov| or \lstinline|add|.
        Starting an enclave (called the \emph{jump in} operation) is performed by jumping to the beginning of its code section.
        If the enclave was created with \createenclave{} the jump in can be performed using the \lstinline|jin $\encs$| abstract action.
    \item\textbf{C3: Access to timer, registers and memory.}
        MSP430 architecture (and thus Sancus) comes with a cycle-accurate and user-programmable timer called \emph{Timer A}.
        An important attacker's capability is the access to Timer A.
        This capability is implicitly offered to the attacker through the $\oequiv$ relation as shown in~\cref{sec:checking}.
        We also let the attacker monitor registers and unprotected memory locations so to detect direct information leakages from the enclave.
        For that, we extend the set of observables $\mc{O}$ we introduced in~\cref{ex:abstract} as follows:
        \begin{itemize}
            \item $\mi{Diverge}$, denotes that the CPU loops indefinitely.
            Since this is not decidable in general, we assume the CPU is looping  after a predefined timeout;
            \item $\mi{Handle}\ k$, denotes that the CPU detected an interrupt which took $k$ cycles to start its handler;
            \item $\mi{Reti}$, denotes  that the attacker has finished executing the interrupt handler and is giving back the control to the enclave;
            \item $\mi{Reset}$, denotes a CPU reset;
            \item $\mi{GIE}$, denotes that interrupts are enabled after executing the input;
            \item $\mi{UMem}\ v$, denotes
             the value of a predefined memory location  controlled by the attacker after executing the input;
            \item $\mi{Reg}\ v$, denotes the value of a predefined register after executing the input;
            \item $\mi{UM}$ or $\mi{PM}$, denotes the CPU mode after executing the input: $\mi{UM}$ means that the CPU is in Unprotected Mode and the attacker controls it, $\mi{PM}$ stands for Protected Mode and it means that the enclave controls the CPU.
        \end{itemize}
    \item\textbf{C4: Interrupts from a cycle-accurate timer.}
        Attackers with this capability may use the action \lstinline|timer_enable k| to start a timer that raises an interrupt \lstinline|k| CPU cycles after the enclave has been started.
        These attackers can also specify how to handle scheduled interrupts using the \lstinline|isr| section of their specification.
    \item\textbf{C5: Returning from an interrupt.}
        Returning from interrupts is performed through the \lstinline|reti| instruction.
    \item\textbf{C6: Abusing enclave entry points.}
        The attacker might try to enter the enclave in unexpected ways by performing \textbf{C2} and \textbf{C5} in the wrong sections.
        Typical examples are: starting the enclave while handling an interrupt or returning from an interrupt out of the \lstinline|isr| section.


\end{itemize}
%

We now
define a very basic attacker and show how to
add more capabilities to represent  interesting attack scenarios.
%



\myskip
\subsubsection*{Basic attacker}
The basic attacker $\Abasic$ of~\cref{fig:basicattacker}, discussed in~\cref{ex:attacker},  can create enclaves and start them, observing the execution time, registers and unprotected memory locations.
It
has capabilities \textbf{C1}, \textbf{C2} and \textbf{C3}.
This attacker can  detect timing side-channels (cf.~\cref{ex:checking}), direct leaks of the secret such as copying of the secret to a public register or to unprotected memory locations, and other indirect leaks such as executing a command with an observable side effect in one of two branches.
This attacker does not use interrupts so, once the enclave is started there is no way for the attacker to affect its execution in any way.
We use this attacker as a baseline for secure enclaves: if an enclave is noninterfering with respect to this attacker, then we regard it as secure and we expect it to be noninterfering even with respect to more powerful attackers, along the full abstraction results of~\citet{busi2021securing}.
\begin{figure}[tb]
    \begin{lstlisting}
enclave {
    mov s, &unprot_mem; /* direct leakage of the secret */

    |

    /* indirect leakage if c has observable side effect */
    cmp s, #0;
    (
        ifz (c) (nop$^{\textrm{clocks(c)}}$)
    );
};
    \end{lstlisting}
    \caption{\small
        Insecure enclave.
    }
    \label{fig:insecureenclave}
\end{figure}
\subsubsection*{An insecure enclave}
\cref{fig:insecureenclave} illustrates a generic enclave specification that exhibits insecure behaviors with respect to $\Abasic$.
This is useful to clarify the scope of the analysis.
In fact, these insecure behaviors should not be considered attacks as they are not expected to be fixed by the Sancus security architecture.
In particular, the \lstinline|mov| is a direct leakage copying the secret to unprotected memory.
This is allowed in Sancus, so it is up to the programmer to avoid it in the code.
Command \lstinline|ifz (c1) (c2)| specifies an if-then-else branch based on the previous comparison on the secret \lstinline|s|.
The enclave of \cref{fig:insecureenclave} uses the construction \lstinline|ifz (c) (nop$^{\textrm{clocks(c)}}$)| that has a command \lstinline|c| on one side and a sequence of length \lstinline|clocks(c)| of \lstinline|nop| commands on the other side, where \lstinline|clocks(c)| is the expected number of clock cycles of \lstinline|c|.
This can be easily obtained from the processor's documentation~\cite{ti-msp430,openMSP430}.
Since \lstinline|nop| does nothing and takes exactly one clock cycle, this construction prevents the timing side-channel of~\cref{ex:timing1}, by letting the two branches take the same amount of clock cycles.
However, the branch is insecure for all commands \lstinline|c| that have observable side effects, i.e., that are not equivalent to a long enough sequence of \lstinline|nop| commands.
For example, \lstinline|mov #42, r4| writes a constant to register \lstinline|r4| and is insecure, since values of registers are returned to the attacker at the end of the enclaved execution.
Similarly, \lstinline|mov #42, &unprot_mem| is insecure as it writes a constant to an unprotected memory location.
In these cases, the attacker can infer the value of the secret by observing whether $42$ has been written or not.
These attacks are detected by $\Abasic$ through capability \textbf{C3} which enables the attacker to observe register values and unprotected memory locations through a suitable $\oequiv$ definition.
\begin{figure}[tb]
\begin{lstlisting}
enclave {
    cmp s, #0; /* compare s with 0 */
    (
        /* secure even if c has observable side-effect */
        ifz (c; nop) (nop; c)
    );
};
\end{lstlisting}
\caption{\small A secure enclave exhibiting two kinds of balanced branches that are intuitively secure.}
\label{fig:secureenclave}
\end{figure}

\begin{figure}[tb]



\begin{lstlisting}
enclave {
    cmp s, #0;
    (
        /* ifz (c; nop) (nop; c) */
        ifz (mov r5, r5; nop) (nop; mov r5, r5) |
        ifz (mov &enc_s, &enc_s; nop)
            (nop; mov &enc_s, &enc_s) |

        ifz (add #1,  &data_s; nop) (nop; add #1,  &data_s) |
        ifz (mov #42, &data_s; nop) (nop; mov #42, &data_s) |

        ifz (mov &unprot_mem, r8; nop)
            (nop; mov &unprot_mem, r8) |
        ifz (mov #42, &unprot_mem; nop)
            (nop; mov #42, &unprot_mem) |
        ifz (jmp #data_s; nop)
            (nop; jmp #data_s) |

        ifz (dint; nop) (nop; dint) |
        ifz (rst;  nop) (nop; rst)
    );
    (eps | mov &data_s, r4);
    jmp #enc_e
};
\end{lstlisting}
\caption{\small Instantiation of the secure enclave of~\cref{fig:secureenclave} used in our tests.
}
\label{fig:generalenclave}
\end{figure}

\subsubsection*{Secure enclave}
We now define a generic enclave that is intuitively secure with respect to the basic attacker $\Abasic$.
The idea is to devise representative code templates that we expect to be protected by the Sancus enclave.
We use a branch over the secret and combine various commands so that the branches will always be balanced in time since, as we discussed in~\cref{sec:overview}, unbalanced branches are clearly insecure in Sancus.
\cref{fig:secureenclave} reports our general scheme.
The construction has the form \lstinline|ifz (c; nop) (nop; c)|.
We check that changing the ordering of commands does not leak the current branch and thus the secret even when \lstinline|c| has observable side effects.
Interestingly, for uninterruptible enclaves, even if \lstinline|c| has a visible side effect, such as writing a constant value into a public memory location, the two branches are not distinguishable, since the attacker will only be able to see the change after the jump out from the enclave.
In \cite{busi2021securing} it is proved that the same should hold with interruptible enclaves, but we will see that this is not the case for the latest Sancus commit.

This scheme cannot be exhaustive in covering all the interesting secure behaviors that one could check, but we will see that it is expressive enough to capture most of the documented attacks, plus two new ones.
Notice also that, the scheme can be instantiated with many different commands.
We used the instantiation reported in~\cref{fig:generalenclave} for our tests as it offers good generality by covering different interesting cases with a reasonable learning time, as we will discuss next.
Intuitively: the first two commands move a value over itself which clearly should have no side effects (lines 5 and 6); the next two commands perform operations on \datas{} which is not accessible to the attacker (lines 9 and 10), but notice that \datas{} is non-deterministically copied to \lstinline|r4| which is returned to the attacker (line 22); the next three commands try read/write/execute violations (lines 12, 14 and 16); finally, the next  commands try to manipulate the interrupt register and reset the CPU (lines 19 and 20).
Since we compare \secret with \lstinline+0+, two secret values are enough to cover all the cases and we let $\mb{S} = \{ 0, 1 \}$.

%
\begin{figure}[tb]
\begin{lstlisting}
isr {
    /* set interrupt at different times */
    (timer_enable 0 | ... | timer_enable n);
    (reti | jin $\encs$) /* return or start the enclave */
};

prepare {
    /* set interrupt at different times */
    (timer_enable 0 | ... | timer_enable n);
    create_enclave;  /* create the enclave   */
    jin enc_s        /* start the enclave    */
};

cleanup {
    (eps | reti)    /* nothing or return from interrupt */
};
\end{lstlisting}
\caption{\small The advanced attacker $\Aadvanced$ with all capabilities. }
\label{fig:advancedattacker}
\end{figure}

\subsubsection*{Advanced attacker}
In~\cref{fig:advancedattacker} we present an advanced attacker $\Aadvanced$ with all the capabilities.
It extends $\Abasic$ by: setting an interrupt at various times, from $0$ to $n$, using command \lstinline|timer_enable k|, both in \lstinline|prepare| and in \lstinline|isr| sections (\textbf{C4}); returning from an interrupt with \lstinline|reti| in \lstinline|isr| (\textbf{C5}) but also in inappropriate sections, such as \lstinline|cleanup| (\textbf{C6}); starting the enclave while handling an interrupt in the \lstinline|isr| section (\textbf{C6}).
We use this attacker and the secure enclave of~\cref{fig:generalenclave} to automatically rediscover known attacks and to find new ones.

\newcommand{\vbonedescr}{The first instruction after a \texttt{reti} takes one additional cycle}
\newcommand{\vbtwodescr}{There exists an instruction with execution time $>6$}
\newcommand{\vbthreedescr}{Enclaves can be resumed multiple times with \lstinline|reti|}
\newcommand{\vbfourdescr}{Restarting an interrupted enclave from the interrupt-service routine is allowed}
\newcommand{\vbfivedescr}{Creation of multiple enclaves is allowed}
\newcommand{\vbsixdescr}{Enclaves can access unprotected memory}
\newcommand{\vbsevendescr}{Enclaves can manipulate interrupt behavior}
\newcommand{\vconedescr}{The formal model misses DMA peripherals}
\newcommand{\vctwodescr}{The formal model misses the watchdog timer}

\newcommand{\vbeightdescr}{Read/Write violations reset the CPU}
\newcommand{\vbninedescr}{The enclave can reset the CPU explicitly}

\begin{table}[tb]
    \resizebox{\columnwidth}{!}{
    \begin{tabular}{@{}ll@{}}
    \toprule
         \textbf{V-B1} & \vbonedescr \\
         \textbf{V-B2} & \vbtwodescr \\
         \textbf{V-B3} & \vbthreedescr \\
         \textbf{V-B4} & \vbfourdescr\\
         \textbf{V-B5} & \vbfivedescr\\
         \textbf{V-B6} & \vbsixdescr\\
         \textbf{V-B7} & \vbsevendescr\\        \midrule

         \color{Red} \textbf{V-B8}  & \color{Red} \vbeightdescr\\
         \color{Red} \textbf{V-B9}  & \color{Red} \vbninedescr\\

        \bottomrule
    \end{tabular}
    }

    \caption{\small The new discrepancies found in this paper (in red) together with the  implementation-model mismatches of \cite{mindthegap} (in black).}\label{tab:mindthegap}
\end{table}

\subsection{Rediscovering Known Attacks}
\label{sec:knownattacks}
We use the attackers and enclaves presented in the previous section to systematically reproduce the \emph{implementation-model mismatches} presented by~\citet{mindthegap}.\footnote{
The attacks of \cite{mindthegap} based on \emph{missing attacker capabilities} are out of the scope of our analysis.}
They are
 discrepancies between the Sancus implementation and its formal model, \sancusv~\cite{busi2021securing}, leading to attacks that can be performed on the implementation, but are unsuccessful in the model.
We summarize them in the top of~\cref{tab:mindthegap}, in black color.
\textbf{V-B1} shows that the first instruction following a \lstinline|reti| takes one additional cycle in the implementation while
\textbf{V-B2} identifies an instruction (a \lstinline|mov| between locations in the code memory) that takes $7$ cycles, contrary to the assumption in \sancusv that the maximum duration of instructions is $6$ cycles. These two mismatches break the padding mechanism of secure interruptible enclaves in \sancusv.
\textbf{V-B3} shows that enclaves can be resumed multiple times using a \lstinline|reti|, and \textbf{V-B4} shows that restarting an interrupted enclave from the interrupt-service routine is permitted in the implementation.
Both actions are forbidden in \sancusv as they break the intended control-flow on enclaved executions.
\textbf{V-B5} concerns the possibility of creating multiple enclaves, an action not supported by the model.
Finally, \textbf{V-B6} and \textbf{V-B7} show that the access-control policy given by \sancusv is not correctly implemented.
These attacks were found on the original Sancus commit and fixed in next commits.\footnote{The original Sancus commit is \origcommit while the latest commit including all the mitigations is \lastcommit, both available at \cite{mindthegap-repo}.}

%
Consider the attackers \Abasic and \Aadvanced of~\cref{fig:basicattacker,fig:advancedattacker}, and the secure enclave T of~\cref{fig:generalenclave}.
We use \toolname with the goal of proving that~\cref{def:pres-NIA} is broken in the original Sancus commit and it holds in the latest version.
As expected, the first result is that $\SA{\Abasic}{T}{0} \oequiv \SA{\Abasic}{T}{1}$ on both commits, i.e., the secure enclave of~\cref{fig:generalenclave} is indeed secure with respect to the basic attacker \Abasic.
However, we obtain that $\SA{\Aadvanced}{T}{0} \not\oequiv \SA{\Aadvanced}{T}{1}$ on both Sancus commits, breaking preservation of noninterference.
While $\SA{\Aadvanced}{T}{0} \not\oequiv \SA{\Aadvanced}{T}{1}$ was expected on the original Sancus commit, due to the attacks of \cite{mindthegap} reported in \cref{tab:mindthegap}, the fact that equivalence does not hold on the latest Sancus commit indicates new undocumented attacks.

\toolname found all the attacks of \cref{tab:mindthegap} except \textbf{V-B5}, since we have not yet implemented the ability to create multiple enclaves.
This is just an implementation issue that we plan to fix as future work.
Moreover, we automatically analyzed the various Sancus commits, reconstructing the history of the fixes and proving with given confidence and precision that some of those fixes are correct. 
\cref{tab:summary} summarizes this process, highlighting the commits where the various patches occurred.
All the attacks (except \textbf{V-B5}) were found in the original commit and also in all the subsequent ones, up to the commit with the specific patch for that attack.
From then on, we verified that the attack was successfully prevented up to the last commit, i.e., no regression occurred in subsequent commits.
\begin{table}[t]
    \resizebox{\columnwidth}{!}{
        \begin{tabular}{@{}lccc@{}}
        \toprule
            \textbf{} & \textbf{Original commit (\origcommit)} & \textbf{Patch commit} & \textbf{Last commit (\lastcommit)}\\ \midrule
            \textbf{V-B1} & \xmark & \cmark\, (\vbonecommit) & \cmark\\
            \textbf{V-B2} & \xmark & \cmark\, (\vbtwocommit) & \cmark\\
            \textbf{V-B3} & \xmark & \cmark\, (\vbthreecommit) & \cmark\\
            \textbf{V-B4} & \xmark & \cmark\, (\vbfourcommit) & \cmark\\
            \textbf{V-B5} & --- & ---\, (\vbfivecommit) & ---\\
            \textbf{V-B6} & \xmark & \cmark\, (\vbsixcommit) & \cmark\\
            \textbf{V-B7} & \xmark & \cmark\, (\vbsevencommit) & \cmark\\ \bottomrule
        \end{tabular}
    }

    \caption{\small Validation of the fixes by \citet{mindthegap}: \cmark{} and \xmark{}  respectively indicate whether the attack is or is not prevented.}\label{tab:summary}
\end{table}

We explain how \toolname found \textbf{V-B6}.
The other attacks in \cref{tab:mindthegap} are captured in similar, but slightly more involved, ways (see \textbf{V-B1} in \cref{sec:appendix_witness} and the other ones in \cite{additionalmaterial}).
Consider the following command of the enclave used in our tests (cf. \cref{fig:generalenclave}):
\begin{lstlisting}[numbers=none,frame=none,xleftmargin=10pt]
ifz (mov #42, &unprot_mem; nop) (nop; mov #42, &unprot_mem)
\end{lstlisting}
Suppose that the interrupt is set after 1 clock cycle, i.e., in the middle of the \lstinline|mov| of the left branch and immediately after the \lstinline|nop| of the right branch.
The interrupt is postponed to the end of the current command and, moreover, the mechanism of secure interruptible enclaves adds an extra padding so that the execution of the \lstinline|isr| always happens after a fixed amount of clocks.
Thanks to this padding, the attacker cannot infer whether the execution is in the left or the right branch due to timing issues, however if the \lstinline|mov| of the left branch is successful the attacker will see value \lstinline|42| in \lstinline|unprot_mem| making it possible to distinguish the left branch from the right one.
The witnesses testifying this attack are reported in the witness graph in~\cref{fig:witness-b6}.
\begin{figure}[htb]
    \centering
    \resizebox{!}{0.95\columnwidth}{
        \begin{tikzpicture}
            \node (s0b) [state, initial, inner sep=0pt, minimum size=1em] {};
            \node (s0) [state, below = of s0b, inner sep=0pt, minimum size=1em] {};
            \node (s1) [state, below = of s0, inner sep=0pt, minimum size=1em]       {};
            \node (s2) [state, below = of s1, inner sep=0pt, minimum size=1em]       {};
            \node (s3) [state, below = of s2, inner sep=0pt, minimum size=1em]       {};
            \node (s4) [state, below left = of s3, inner sep=0pt, minimum size=1em]       {};
            \node (s5) [state, below right = of s3, inner sep=0pt, minimum size=1em]       {};
            \node (s6) [state, below = of s4, inner sep=0pt, minimum size=1em]       {};
            \node (s7) [state, below = of s5, inner sep=0pt, minimum size=1em]       {};

            \path [->, thick] (s0b) edge node[right] {\lstinline{timer\_enable 3}/\,\mi{UMem\ 0}} (s0);
            \path [->, thick] (s0) edge node[right] {\lstinline{create\_enclave}/\,\mi{UMem\ 0}} (s1);
            \path [->, thick] (s1) edge node[right] {\lstinline{jin enclave\_start}/$\tau$} (s2);
            \path [->, thick, dashed, bend right, color=red, text=black] (s2) edge node[left] {\lstinline{cmp \#0, \#0}/$\tau$} (s3);
            \path [->, thick, dotted, bend left, color=blue, text=black] (s2) edge node[right] {\lstinline{cmp \#0, \#1}/$\tau$} (s3);
            \path [->, thick, dashed, bend right, color=red, text=black] (s3) edge node[left] {\lstinline{ifz_mov_nop}\,/$\tau$} (s4);
            \path [->, thick, dotted, bend left, color=blue, text=black] (s3) edge node[right] {\lstinline{ifz_mov_nop}\,/$\tau$} (s5);
            \path [->, thick, dashed, color=red, text=black] (s4) edge node[left] {\lstinline{IRQ}\,/\mi{Handle\, 4},\,\mi{UMem\ 42}} (s6);
            \path [->, thick, dotted, color=blue, text=black] (s5) edge node[right] {\lstinline{IRQ}\,/\mi{Handle\, 4},\,\mi{UMem\ 0}} (s7);
        \end{tikzpicture}
    }

    \caption{\small (Simplified) Witness graph for \textbf{V-B6} found by \toolname: the attacker observes differences in \mi{UMem}'s values induced by the victim, so deducing the secret value in the comparison.
    The action \lstinline|ifz (mov \#42, &unprot_mem; nop) (nop; mov \#42, &unprot_mem)| is shortened to \lstinline|ifz_mov_nop| to ease reading.}\label{fig:witness-b6}
\end{figure}
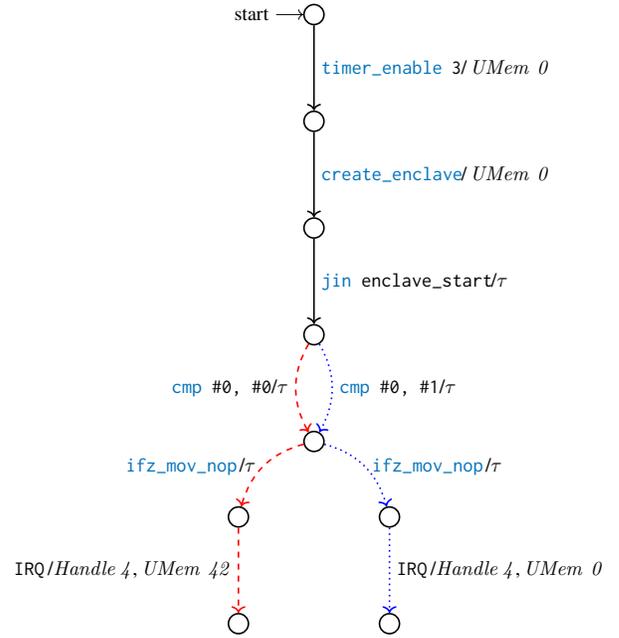
Notice that, this attack is not possible without interrupts since the attacker would only be able to observe value \lstinline|42| after the enclave returns.

\subsection{New Attacks}
As mentioned above, we found that $\SA{\Aadvanced}{T}{0} \not\oequiv \SA{\Aadvanced}{T}{1}$ even on the latest Sancus commit where all the mitigations to attacks of \cite{mindthegap} were implemented.
Since $\SA{\Abasic}{T}{0} \oequiv \SA{\Abasic}{T}{1}$ using the basic attacker, the non-equivalence between \SA{\Aadvanced}{T}{0} and \SA{\Aadvanced}{T}{1} corresponds to new attacks breaking~\cref{prop:fullinstance}.

The attacks we found are related to new undocumented discrepancies between \sancusv and its implementation, reported at the bottom of \cref{tab:mindthegap} in red color and described below.
For lack of space, witness graphs are reported in \cref{sec:appendix_witness}.

\subsubsection*{\textbf{V-B8} \vbeightdescr}
We found that read/write accesses to unprotected memory reset the CPU, while in \sancusv unauthorized accesses raise an exception that is handled explicitly.
To see how \toolname discovered this discrepancy, consider the following command of the enclave used in our tests (cf. \cref{fig:generalenclave}):
\begin{lstlisting}[numbers=none,frame=none,xleftmargin=10pt]
ifz (mov &unprot_mem, r8; nop) (nop; mov &unprot_mem, r8)
\end{lstlisting}
%
Since reading from unprotected memory resets the CPU, if the attacker schedules an interrupt to arrive during the \lstinline|mov|, on the left branch the enclave will reset before the interrupt occurs  never reaching the \lstinline|isr| section and making it possible to distinguish the two branches and guess the secret guard (cf. \textbf{V-B8} in~\cref{sec:appendix_witness}).

We also found that when the enclave tries to jump to the data section, the CPU jumps to the location \lstinline|0x0240| and loops executing the instruction occurring at the address following that of the jump instruction.\footnote{Putting an instruction like \lstinline|add #1, r4| at such address allows us to see that register \lstinline|r4| keeps increasing while the CPU loops.}
Interestingly, this buggy behavior does not seem to be exploitable to leak the enclave secret.
In fact, consider the following enclave command:
\begin{lstlisting}[numbers=none,frame=none,xleftmargin=10pt]
ifz (jmp #$\datas$; nop) (nop; jmp #$\datas$)
\end{lstlisting}
By scheduling the interrupt as for the previous case, we can make the left branch loop before the interrupt occurs. However, differently from the reset, here the CPU is looping but handles the interrupts, so the attacker cannot distinguish between the two branches as they will both eventually execute the \lstinline|isr|.

\subsubsection*{\textbf{V-B9} \vbninedescr}
We found that the enclave can reset the CPU, while in \sancusv this is forbidden.
This is captured by the following enclave command:
\begin{lstlisting}[numbers=none,frame=none,xleftmargin=10pt]
ifz (rst; nop) (nop; rst)
\end{lstlisting}
As for read/write violations, the attacker can schedule an interrupt so that the left branch resets while the right one executes the \lstinline|isr| distinguishing the two cases (cf. \textbf{V-B9} in~\cref{sec:appendix_witness}).




\subsubsection*{Possible fixes}
Sancus developers have opted to reset the CPU when violations occur, but we have shown that this behavior is exploitable when enclaves are interruptible.
In \sancusv violations and resets inside the enclave raise exceptions that are handled explicitly.
Interestingly, this solution does not prevent the attack found by \toolname.
Instead, it makes code such as \lstinline|ifz (jmp #$\datas$; nop) (nop; jmp #$\datas$)| and \lstinline|ifz (rst; nop) (nop; rst)| vulnerable even under the basic attacker $\Abasic$.
Intuitively, exceptions make programs more vulnerable, but the effect is that noninterference is preserved.
This might appear counterintuitive but in fact exceptions are interruptions and, as such, make violations directly observable to the attacker.
A more conservative fix is to ignore violations when possible, e.g.,
 read/write accesses to unprotected memory and explicit resets could be ignored preventing the attacks.
In this case, however, the implementation would diverge from \sancusv requiring a new formal proof of full abstraction.
We have
disclosed our findings to Sancus developers, and we are discussing with them possible fixes.

\section{Conclusion and future work}\label{sec:concl}

We presented a new method to \emph{bridge the gap} between models and implementations that first extracts a behavioral model by directly interacting with the real system, then precisely analyzes it to identify attacks and anomalies.
This is implemented in the \toolname tool~\cite{additionalmaterial}.
We used \toolname to systematically analyze Sancus,
checking various commits and
proving that the tool scales to a real, fully-functional secure architecture.
\toolname could be used to continuously checking  new commits, looking  for  possible regressions and new attacks, either through efficient tests that are automatically synthesized from previously-found attacks, or through a new round of learning and checking.

Note that, unlike testing, we do not simply look for attacks, but we formally verify security up to the correctness of the learned model which can precisely quantified  as explained in~\cref{sec:specification}.
In fact, our initial step involves verifying robust noninterference for the basic attacker before proceeding to identify potential issues with the advanced attacker.
Furthermore, it is well-known that noninterference is not a safety property (technically it is a 2-hypersafety~\cite{clarkson2010hyperproperties}) and it can only be refuted by exhibiting a pair of finite witness traces showing how attackers can influence the behavior of the victim in different ways.
While it is possible to extract these witnesses when noninterference is violated (\cref{fig:tree-toy}), the opposite does not hold: we cannot find a finite set of tests that imply noninterference.
Notice, in particular, that our systems can execute an unbounded number of actions, for example when scheduling a new interrupt in the \lstinline|isr| section.
In summary, analogously to what~\citet{chen2016pac} observed about correctness verification, with testing it is possible to find counterexamples to noninterference, but it is not possible to prove noninterference without learning a model of the system.

There are various extensions that we are planning as future work.
We intend to apply \toolname to other case studies.
The tool was designed in a modular way so that the learning and checking back-end (\cref{sec:learning,sec:checking}) can be fully reused for another architecture.
Specification (\cref{sec:specification}) instead is architecture-dependent and amounts to defining a suitable abstract behavior, the attackers and (possibly) the trusted code, but we oversee no critical issues with that.
Scalability is always an issue when analyzing real systems, in fact achieving strong security guarantees on larger/more complex architectures may require a lot of computational time (or may be infeasible).
\toolname makes it possible to overcome scalability issues at the price of a weaker security proof, e.g., by reducing the set of attackers.
The main bottleneck for us was in the performance of Verilator-generated code that forced us to limit attacker/enclave specifications.
As a future work, we are planning to analyze a Sancus implementation in FPGA.



\bibliographystyle{IEEEtranN}

{\small
\bibliography{references}
}

\FloatBarrier
\appendices 
\crefalias{section}{appendix}



\section{Witness graphs}\label{sec:appendix_witness}
\subsection{Witness graph for \textbf{V-B1}}
%
Recall that the~\textbf{V-B1} implementation-model mismatch states that the first instruction after a \lstinline|reti| takes one cycle more than expected.
\cref{fig:graph-b1} reports the witness graph that \toolname produces for~\textbf{V-B1} on the original commit (\origcommit) of~\cite{mindthegap-repo}.
The attack re-discovered by \toolname works as follows: first the attacker uses \lstinline|timer_enable 3| to schedule an interrupt during the first instruction of the \lstinline|ifz (add #1, &data_s; nop) (nop; add #1, &data_s)| action.
Then, when the interrupt handler is invoked, it schedules another interrupt for the first cycle after exiting the handler (\lstinline|timer_enable 1|).
Notice that, when invoking \lstinline|timer_enable k|, the Timer A device of Sancus starts counting between $0$ and $k$, so in our case the observed value for $\mi{TimerA}$ is either $0$ or $1$.
So, when the interrupt handler returns (\lstinline|reti|):
\begin{itemize}
    \item If \lstinline|s| equals 0, then the first CPU cycle after \lstinline|reti| is part of the padding put in place by the processor to mitigate the Nemesis attack~\cite{nemesis}.
    When the padding completes the control is handed over directly to the interrupt handler and the attacker observes $\mi{TimerA}\ 0$.
    \item Otherwise, the control goes back to the enclave and the CPU executes the remaining \lstinline|nop| instruction from \lstinline|ifz (add #1, &data_s; nop) (nop; add #1, &data_s)| before invoking the interrupt handler.
    Due to the bug, \lstinline|nop| takes $2$ cycles instead of $1$ and the \lstinline|ifz| branch is not balanced anymore.
    When the attacker re-gains control it observes $\mi{TimerA}\ 1$.
\end{itemize}
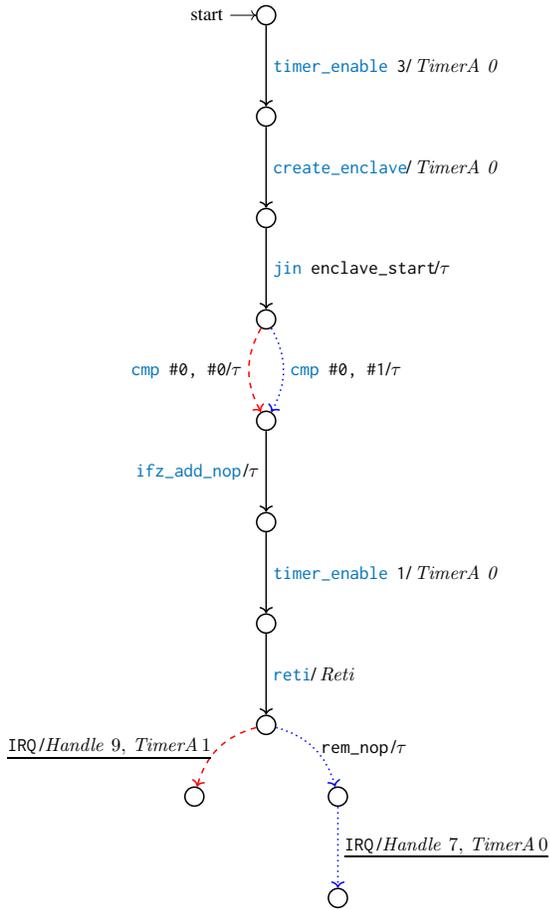
\begin{figure}[tb]
    \centering
    \resizebox{0.85\columnwidth}{!}
    {
        \begin{tikzpicture}
            \node (s0b) [state, initial, inner sep=0pt, minimum size=1em] {};
            \node (s0) [state, below = of s0b, inner sep=0pt, minimum size=1em] {};
            \node (s1) [state, below = of s0, inner sep=0pt, minimum size=1em]       {};
            \node (s2) [state, below = of s1, inner sep=0pt, minimum size=1em]       {};
            \node (s3) [state, below = of s2, inner sep=0pt, minimum size=1em]       {};
            \node (s4) [state, below = of s3, inner sep=0pt, minimum size=1em]       {};
            \node (s5) [state, below = of s4, inner sep=0pt, minimum size=1em]       {};
            \node (s6) [state, below = of s5, inner sep=0pt, minimum size=1em]       {};

            \node (s7) [state, below left = of s6, inner sep=0pt, minimum size=1em]       {};
            \node (s8) [state, below right = of s6, inner sep=0pt, minimum size=1em]       {};

            \node (s9) [state, below = of s8, inner sep=0pt, minimum size=1em]       {};

            \path [->, thick] (s0b) edge node[right] {\lstinline{timer\_enable 3}/\,\mi{TimerA\ 0}} (s0);
            \path [->, thick] (s0) edge node[right] {\lstinline{create\_enclave}/\,\mi{TimerA\ 0}} (s1);
            \path [->, thick] (s1) edge node[right] {\lstinline{jin enclave\_start}/$\tau$} (s2);
            \path [->, thick, dashed, bend right, color=red, text=black] (s2) edge node[left] {\lstinline{cmp \#0, \#0}/$\tau$} (s3);
            \path [->, thick, dotted, bend left, color=blue, text=black] (s2) edge node[right] {\lstinline{cmp \#0, \#1}/$\tau$} (s3);

            \path [->, thick] (s3) edge node[left]  {\lstinline{ifz_add_nop}\,/$\tau$} (s4);
            \path [->, thick] (s4) edge node[right] {\lstinline{timer\_enable 1}/\,\mi{TimerA\ 0}} (s5);
            \path [->, thick] (s5) edge node[right] {\lstinline{reti}/\,\mi{Reti}} (s6);

            \path [->, thick, dashed, bend right, color=red, text=black] (s6) edge node[left] {\underline{\lstinline{IRQ}\,/$\mi{Handle}\ 9,\,\mi{TimerA}\, 1$}} (s7);

            \path [->, thick, dotted, bend left, color=blue, text=black] (s6) edge node[right] {\lstinline{rem\_nop}\,/$\tau$} (s8);

            \path[->, thick, dotted, color=blue, text=black] (s8) edge node[right] {\underline{\lstinline{IRQ}\,/$\mi{Handle}\ 7,\,\mi{TimerA}\, 0$}} (s9);
        \end{tikzpicture}
    }

    \caption{\small (Simplified) Witness graph for \textbf{V-B1} found by \toolname: the attacker observes differences in behavior (underlined), so deducing the secret value in the comparison.
    Here, \lstinline|ifz (add \#1, &data_s; nop) (nop; add \#1, &data_s)| is shortened to \lstinline|ifz_add_nop| to ease reading, and \lstinline|rem_nop| denotes the \lstinline|nop| instruction of the \lstinline|ifz_add_nop| left to be executed after returning from the interrupt.}\label{fig:graph-b1}
\end{figure}

\subsection{Witness graph for \textbf{V-B8}}
%
Recall that the novel implementation-model~\textbf{V-B8} observes that read/write violations reset the CPU.
\cref{fig:graph-b8} reports the witness graph that \toolname produces for~\textbf{V-B8} on the last commit (\lastcommit) of~\cite{mindthegap-repo} and supports the explanation given in~\cref{sec:sancus}.
\begin{figure}[tb]
    \centering
    \resizebox{0.75\columnwidth}{!}{
        \begin{tikzpicture}
            \node (s0b) [state, initial, inner sep=0pt, minimum size=1em] {};
            \node (s0) [state, below = of s0b, inner sep=0pt, minimum size=1em] {};
            \node (s1) [state, below = of s0, inner sep=0pt, minimum size=1em]       {};
            \node (s2) [state, below = of s1, inner sep=0pt, minimum size=1em]       {};
            \node (s3) [state, below = of s2, inner sep=0pt, minimum size=1em]       {};
            \node (s4) [state, below left = of s3, inner sep=0pt, minimum size=1em]       {};
            \node (s5) [state, below right = of s3, inner sep=0pt, minimum size=1em]       {};
            \node (s7) [state, below = of s5, inner sep=0pt, minimum size=1em]       {};

            \path [->, thick] (s0b) edge node[right] {\lstinline{timer\_enable 3}/\,\mi{UMem\ 0}} (s0);
            \path [->, thick] (s0) edge node[right] {\lstinline{create\_enclave}/\,\mi{UMem\ 0}} (s1);
            \path [->, thick] (s1) edge node[right] {\lstinline{jin enclave\_start}/$\tau$} (s2);
            \path [->, thick, dashed, bend right, color=red, text=black] (s2) edge node[left] {\lstinline{cmp \#0, \#0}/$\tau$} (s3);
            \path [->, thick, dotted, bend left, color=blue, text=black] (s2) edge node[right] {\lstinline{cmp \#0, \#1}/$\tau$} (s3);
            \path [->, thick, dashed, bend right, color=red, text=black] (s3) edge node[left] {\underline{\lstinline{ifz_mov_nop}\,/\mi{Reset}}} (s4);
            \path [->, thick, dotted, bend left, color=blue, text=black] (s3) edge node[right] {\lstinline{ifz_mov_nop}\,/$\tau$} (s5);
            \path [->, thick, dotted, color=blue, text=black] (s5) edge node[right] {\underline{\lstinline{IRQ}\,/\mi{Handle\, 4},\,\mi{UMem\ 0}}} (s7);
        \end{tikzpicture}
    }

    \caption{\small (Simplified) Witness graph for \textbf{V-B8} found by \toolname: the attacker observes differences in behavior (\mi{Reset} vs. \mi{Handle}, both underlined), so deducing the secret value in the comparison.
    The action \lstinline|ifz (mov \&unprot\_mem, r8; nop) (nop; mov \&unprot\_mem, r8)| is shortened to \lstinline|ifz_mov_nop| to ease reading.}\label{fig:graph-b8}
\end{figure}
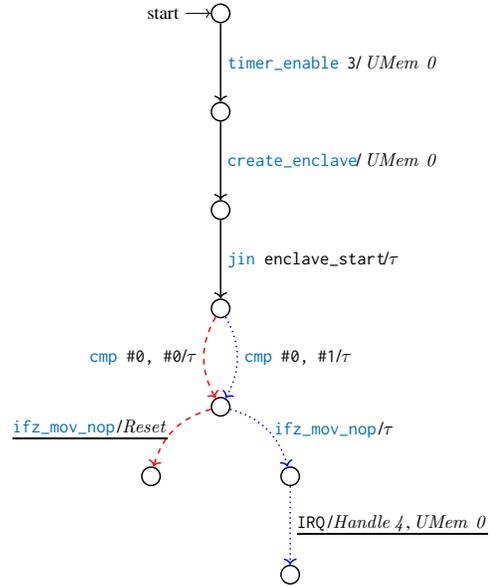

\subsection{Witness graph for \textbf{V-B9}}
%
Recall that the novel implementation-model~\textbf{V-B9} observes that enclaves can reset the CPU explicitly.
\cref{fig:graph-b9} reports the witness graph that \toolname produces for~\textbf{V-B9} on the last commit (\lastcommit) of~\cite{mindthegap-repo} and supports the explanation given in~\cref{sec:sancus}.
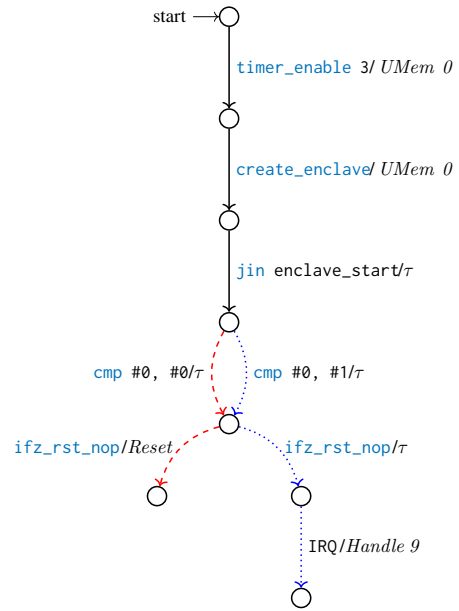
\begin{figure}[t]
    \centering
    \resizebox{0.7\columnwidth}{!}{
        \begin{tikzpicture}
            \node (s0b) [state, initial, inner sep=0pt, minimum size=1em] {};
            \node (s0) [state, below = of s0b, inner sep=0pt, minimum size=1em] {};
            \node (s1) [state, below = of s0, inner sep=0pt, minimum size=1em]       {};
            \node (s2) [state, below = of s1, inner sep=0pt, minimum size=1em]       {};
            \node (s3) [state, below = of s2, inner sep=0pt, minimum size=1em]       {};
            \node (s4) [state, below left = of s3, inner sep=0pt, minimum size=1em]       {};
            \node (s5) [state, below right = of s3, inner sep=0pt, minimum size=1em]       {};
            \node (s7) [state, below = of s5, inner sep=0pt, minimum size=1em]       {};

            \path [->, thick] (s0b) edge node[right] {\lstinline{timer\_enable 3}/\,\mi{UMem\ 0}} (s0);
            \path [->, thick] (s0) edge node[right] {\lstinline{create\_enclave}/\,\mi{UMem\ 0}} (s1);
            \path [->, thick] (s1) edge node[right] {\lstinline{jin enclave\_start}/$\tau$} (s2);
            \path [->, thick, dashed, bend right, color=red, text=black] (s2) edge node[left] {\lstinline{cmp \#0, \#0}/$\tau$} (s3);
            \path [->, thick, dotted, bend left, color=blue, text=black] (s2) edge node[right] {\lstinline{cmp \#0, \#1}/$\tau$} (s3);
            \path [->, thick, dashed, bend right, color=red, text=black] (s3) edge node[left] {\lstinline{ifz_rst_nop}\,/$\mi{Reset}$} (s4);
            \path [->, thick, dotted, bend left, color=blue, text=black] (s3) edge node[right] {\lstinline{ifz_rst_nop}\,/$\tau$} (s5);
            \path [->, thick, dotted, color=blue, text=black] (s5) edge node[right] {\lstinline{IRQ}\,/\mi{Handle\, 9}} (s7);
        \end{tikzpicture}
    }

    \caption{\small (Simplified) Witness graph for \textbf{V-B9} found by \toolname: the attacker observes differences in behavior (victim's \mi{Reset} vs. \mi{Handle}, both underlined), so deducing the secret value in the comparison.
    The action \lstinline|ifz (rst; nop) (nop; rst)| is shortened to \lstinline|ifz_rst_nop| to ease reading.}\label{fig:graph-b9}
\end{figure}

\end{document}